\theoremstyle{plain}
\newtheorem{theorem}{Theorem}[section]
\newtheorem{lemma}[theorem]{Lemma}
\theoremstyle{definition}
\newtheorem{definition}[theorem]{Definition}
\theoremstyle{remark}
\def\inprob{\stackrel{p}{\rightarrow}}
\newcommand{\var}{\text{var}}
\newcommand{\Pb}{\mathbb{P}}
\newcommand{\Pn}{\mathbb{P}_n}
\newcommand{\E}{\mathbb{E}}
\newcommand{\indep}{\perp \!\!\! \perp}
\def\bfx{{\bf x}} \def\bfy{{\bf y}}
\title{An Efficient Doubly-Robust Test for the \\ Kernel Treatment Effect}
\author{%
  Diego Martinez-Taboada \\
  Department of Statistics and Data Science\\
  Carnegie Mellon University\\
  Pittsburgh, PA 15213 \\
  \texttt{diegomar@andrew.cmu.edu} \\
   \And
 Aaditya Ramdas \\
  Department of Statistics and Data Science\\
  Machine Learning Department \\
  Carnegie Mellon University\\
  Pittsburgh, PA 15213 \\
  \texttt{aramdas@stat.cmu.edu} \\
  \And
 Edward H. Kennedy \\
  Department of Statistics and Data Science\\
  Carnegie Mellon University\\
  Pittsburgh, PA 15213 \\
  \texttt{edward@stat.cmu.edu} \\
}
\begin{document}

\maketitle

\begin{abstract}
  The average treatment effect, which is the difference in expectation of the counterfactuals, is probably the most popular target effect in causal inference with binary treatments. However, treatments may have effects beyond the mean, for instance decreasing or increasing the variance. We propose a new kernel-based test for distributional effects of the treatment. It is, to the best of our knowledge, the first kernel-based, doubly-robust test with provably valid type-I error. Furthermore, our proposed algorithm is computationally efficient, avoiding the use of permutations. 
\end{abstract}

\section{Introduction} \label{section:introduction}

In the context of causal inference, potential outcomes \citep{rubin2005causal} are widely used to address counterfactual questions (e.g. what would have happened had some intervention been performed?). This framework considers
\begin{equation*}
    (X, A, Y) \sim \mathbb{P},
\end{equation*}
where $X$ and $Y$ represent the covariates and outcome respectively, and $A \in \{ 0, 1 \}$ is a binary treatment. Furthermore, the following three conditions are assumed:
\begin{itemize}
    \item [i)] (Consistency) $Y = A Y^*_1 + (1 - A) Y^*_0$ (where $Y^*_1, Y^*_0$ are the potential outcomes).
    \item [ii)] (No unmeasured confounding) $Y^*_0, Y^*_1 \indep A \mid X$.
    \item [iii)] (Overlap) For some $\epsilon > 0$, we have $\; \epsilon < \pi(X) := \mathbb{P}_{A|X}(A = 1 | X) < 1 - \epsilon$ almost surely.
\end{itemize}

Such assumptions allow for identification of causal target parameters. For instance, one may be interested in the average variance-weighted treatment effects \citep{robins2008higher, li2011higher},  stochastic intervention effects \citep{munoz2012population, kennedy2019nonparametric}, or treatment effect bounds \citep{richardson2014nonparametric, luedtke2015statistics}. However, most of the literature focuses on estimation and inference of the average treatment effect \citep{imbens2004nonparametric, hernan2020causal}, defined as the difference in expectation of the potential outcomes
\begin{equation*}
    \psi = \mathbb{E}[Y^*_1 - Y^*_0].
\end{equation*}

Given $(X_i, A_i, Y_i)_{i = 1}^N \sim (X, A, Y)$, there are three widespread estimators of $\psi$. First, the plug-in (PI) estimator: 
\begin{equation*}
    \hat{\psi}_{\text{PI}} = \frac{1}{N} \sum_{i=1}^N \left\{ \hat{\theta}_1(X_i) - \hat{\theta}_0(X_i) \right\},
\end{equation*}
where $\hat{\theta}_1(X),\hat{\theta}_0(X)$ estimate $\mathbb{E}[Y_1^*|X],\mathbb{E}[Y_0^*|X]$. Second, the inverse propensity weighting (IPW) estimator: 
\begin{equation*}
    \hat{\psi}_{\text{IPW}} = \frac{1}{N} \sum_{i=1}^N \left\{ \frac{Y_i A_i}{\hat{\pi}(X_i)} - \frac{Y_i (1 - A_i)}{1 - \hat{\pi}(X_i)}\right\},
\end{equation*}
where $\hat{\pi}(X)$ estimates the propensity scores $\mathbb{P}[A = 1|X]$. Third, the so called Augmented Inverse Propensity Weighted (AIPW) estimator:
\begin{align*}
    \hat{\psi}_{\text{AIPW}} &= \frac{1}{N} \sum_{i=1}^N \;  \biggl\{ \hat{\theta}_1(X_i) - \hat{\theta}_0(X_i) + \left(\frac{A_i}{\hat{\pi}(X_i)} - \frac{1 - A_i}{1 - \hat{\pi}(X_i)}\right) \left( Y_i - \hat{\theta}_{A_i}(X_i)\right) \biggl\}.
\end{align*}

 Under certain conditions (e.g., consistent nuisance estimation at $n^{-1/4}$ rates), the asymptotic mean squared error of the AIPW estimator is smaller than that of the IPW and PI estimator, and minimax optimal in a local asymptotic sense \citep{kennedy2022semiparametric}, hence it has become increasingly popular in the last decade. The AIPW estimator is often referred to as the doubly-robust estimator. We highlight that double-robustness is an intriguing property of an estimator that makes use of two models, in which the estimator is consistent even if only one of the two models is well-specified and the other may be misspecified; we refer the reader to \citet{kang2007} for a discussion on doubly-robust procedures. 

However, the treatment might have effects beyond the mean, for instance in the variance or skewness of the potential outcome. The average treatment effect will prove insufficient in this case. Consequently, one may be interested in testing whether the treatment has \textit{any} effect in the distribution of the outcome. This question naturally arises in a variety of applications. For instance, one may want to check whether there is any difference between a brand-name drug and its generic counterpart, or understand whether a treatment simply shifts the distribution of the outcome (or, in turn, it also affects higher order moments). 

In this work, we revisit the problem of testing the null hypothesis $H_0: P_{Y^*_1} = P_{Y^*_0}$ against $H_1: P_{Y^*_1} \neq P_{Y^*_0}$. We propose a distributional treatment effect test based on kernel mean embeddings and the asymptotic behaviour of the AIPW estimator. Our contributions are three-fold:
\begin{itemize}
    \item Up to our knowledge, we propose the first kernel-based distributional test to allow for doubly-robust estimators with provably valid type-I error.
    \item The proposed distributional treatment effect test is permutation-free, which makes it computationally efficient. 
    \item We empirically test the power and size of the proposed test, showing the substantial benefits of the doubly robust approach.
\end{itemize}

\section{Related work}

Distributional treatment effects have been addressed from a variety of points of view. \citet{abadie2002bootstrap} was one of the first works to propose to test distributional hypothesis attempting to estimate the counterfactual cumulative distribution function (cdf) of the outcome of the treated and untreated. \citet{chernozhukov2013inference} proposed to regress the cdf after splitting the outcome in a grid. Further contributions followed with alike cdf-based approaches \citep{landmesser2016decomposition, diaz2017efficient}.

Other approaches to the problem include focusing on the probability density function (pdf) instead of the cdf. \citet{robins2001inference} introduced a doubly robust kernel estimator for the counterfactual density, while \citet{westling2020unified} proposed to conduct density estimation under a monotone density assumption. \citet{kim2018causal} and \citet{kennedy2021semiparametric} suggested to compute $L^p$ distances between the pdf of the outcome distribution for the different counterfactuals. Conditional distributional treatment effects have also been addressed, with \citet{shen2019estimation} proposing to estimate the cdf of the counterfactuals for each value to be conditioned on.

On the other hand, kernel methods have recently gained more and more attention in the context of causal inference. Kernel-based two-stage instrumental variable regression was proposed in \citet{singh2019kernel}, while \citet{singh2020kernel2} presented estimators based on kernel ridge regression for nonparametric causal functions such as dose, heterogeneous, and incremental response curves. Furthermore, \citet{singh2021kernel} conducted mediation analysis and dynamic treatment effect using kernel-based regressors as nuisance functions. Causal inference with treatment measurement error i.e. when the cause is corrupted by error was addressed in \citet{pmlr-v180-zhu22a} using kernel mean embeddings to learn the latent characteristic function.

However, kernel mean embeddings for distributional representation were not suggested in the causal inference literature until \citet{muandet2021counterfactual} proposed to use an IPW estimator to estimate the average treatment effect on the embedding, which leads to a kernel-based distributional treatment effect test based on this embedding and the MMD. With a very similar motivation, \citet{park2021conditional} proposed a test for conditional distributional treatment effects based on kernel conditional mean embeddings. In a preprint, \citet{fawkes2022doubly} extended the work to AIPW estimators, however no theoretical guarantees regarding type-1 error control of the proposed tests were provided. 

Finally, the kernel-based tests used in the aforementioned cases involve test statistics that are degenerate U-statistics under the null, hence obtaining theoretical p-values of the statistic is not possible. In turn, \citet{kim2020dimension} proposed the idea of cross U-statistics, which is based on splitting the data for achieving a normal asymptotic distribution after studentization.  Similarly, \citet{shekhar2022permutation} exploited sample splitting for proposing a permutation-free kernel two sample test.

\section{Preliminaries} \label{section:preliminaries}

In this section, we introduce the concepts that the proposed test for distributional treatment effects is mainly based on: Maximum Mean Discrepancy \citep{gretton2012kernel}, Conditional Mean Embeddings \citep{song2009hilbert}, Kernel Treatment Effects \citep{muandet2021counterfactual}, dimension-agnostic inference using cross U-statistic \citep{kim2020dimension,shekhar2022permutation}, and the normal asymptotic behaviour of the doubly robust estimator \citep{funk2011doubly}.

\subsection{Maximum Mean Discrepancy (MMD) and Conditional Mean Embeddings}


Let $\mathcal{X}$ be a non-empty set and let $\mathcal{H}$ be a Hilbert space of functions $f: \mathcal{X} \to \mathbb{R}$ with inner product $\langle \cdot, \cdot \rangle_\mathcal{H}$. A function $k: \mathcal{X} \times \mathcal{X} \to \mathbb{R}$ is called a reproducing kernel of $\mathcal{H}$ if (i) $k(\cdot, x) \in \mathcal{H}, \text{ for all } x \in \mathcal{X}$, (ii) $\langle f, k(\cdot, x) \rangle_\mathcal{H} = f(x)$ for all $ x \in \mathcal{X}, f \in \mathcal{H}.$ If $\mathcal{H}$ has a reproducing kernel, then it is called 
 a Reproducing Kernel Hilbert Space (RKHS).

Building on a reproducing kernel $k$ and a set of probability measures $\mathcal{P}$, the Kernel Mean Embedding (KME) maps distributions to elements in the corresponding Hilbert space as follows:
\[
\mu: \mathcal{P} \to \mathcal{H}, \quad \mathbb{P} \to \mu_\mathbb{P}:= \int k(\cdot, x) d\mathbb{P}(x).
\]
If the kernel $k$ is ``\textit{characteristic}'' (which is the case for frequently used kernels such as the RBF or Matern kernels), then $\mu$ is injective. Conditional mean embeddings \citep{song2009hilbert} extend the concept of kernel mean embeddings to conditional distributions. Given two RKHS $\mathcal{H}_{k_x}, \mathcal{H}_{k_y}$, the conditional mean embedding operator is a Hilbert-Schimdt operator $\mathcal{C}_{Y|X}: \mathcal{H}_{k_x} \to \mathcal{H}_{k_y}$ satisfying  $\mu_{Y | X = x} = \mathcal{C}_{Y|X} {k_x(\cdot, x)}$, where $ \mathcal{C}_{Y|X} := \mathcal{C}_{YX} \mathcal{C}_{XX}^{-1}$, $\mathcal{C}_{YX} := \mathbb{E}_{Y, X}[k_y(\cdot, Y) \otimes k_x(\cdot, X)]$ and $\mathcal{C}_{XX} := \mathbb{E}_{X, X}[k_x(\cdot, X) \otimes k_x(\cdot, X)]$. Given a dataset $\{ \bfx, \bfy \}$, a sample estimator may be defined as 
\begin{equation} \label{eq:conditional_mean_embedding}
    \hat{\mathcal{C}}_{Y|X} = \Phi_{\bfy}^T(K_{\bfx \bfx} + \lambda I)^{-1} \Phi_{\bfx},
\end{equation}
where $\Phi_{\bfy} := [k_y(\cdot, y_1), \ldots, k_y(\cdot, y_n)]^T$, $\Phi_{\bfx} := [k_x(\cdot, x_1), \ldots, k_x(\cdot, x_n)]^T$, $K_{\bfx \bfx} := \Phi_{\bfx} \Phi_{\bfx}^T$ denotes the Gram matrix, and $\lambda$ is a regularization parameter. 

Building on the KME, \citet{gretton2012kernel} introduced the kernel maximum mean discrepancy (MMD). Given two distributions $\mathbb{P}$ and $\mathbb{Q}$ and a kernel $k$, the MMD is defined as the largest difference in expectations over functions in the unit ball of the respective RKHS:
\begin{equation*}
    \text{MMD}(\mathbb{P}, \mathbb{Q}) = \sup_{f \in \mathcal{H}} \mathbb{E}_{X \sim \mathbb{P}}[f(X)] - \mathbb{E}_{X' \sim \mathbb{Q}}[f(X')] = || \mu_\mathbb{P} - \mu_\mathbb{Q} ||_\mathcal{H}.
\end{equation*}

It can be shown \citep{gretton2012kernel} that 
$\text{MMD}^2(\mathbb{P}, \mathbb{Q}) =  || \mu_\mathbb{P} - \mu_\mathbb{Q} ||_\mathcal{H}^2$. If $k$ is characteristic, then $\text{MMD}(\mathbb{P}, \mathbb{Q})=0$ if and only if $\mathbb{P} = \mathbb{Q}$. Given two samples drawn from $\mathbb{P}$ and $\mathbb{Q}$, the MMD between the empirical distributions may be used to test the null hypothesis $H_0: \mathbb{P} = \mathbb{Q}$ against $H_1: \mathbb{P} \neq \mathbb{Q}$. However, this statistic is a degenerate two-sample U-statistic under the null, thus one cannot analytically calculate the critical values. Consequently, a permutation-based resampling approach is widely used in practice \citep{gretton2012kernel}.

\subsection{Kernel Treatment Effect: A distributional kernel-based treatment effect test}

Based on the MMD and the potential outcomes framework, Kernel Treatment Effects (KTE) were introduced in \citet{muandet2021counterfactual} for testing distributional treatment effects in experimental settings (i.e.\ with known propensity scores). 

Let $(X_i, A_i, Y_i)_{i = 1}^n \sim (X, A, Y)$ such that (i) (consistency) $Y = AY^*_1 - (1 - A)Y^*_0$, (ii) no unmeasured confounding and overlap assumptions hold. The KTE considers the MMD between $Y^*_0$ and $Y^*_1$ to test $H_0: \mathbb{P}_{Y^*_0} = \mathbb{P}_{Y^*_1}$ against $H_0: \mathbb{P}_{Y^*_0} \neq \mathbb{P}_{Y^*_1}$. 
They define
\begin{align*}
\widehat{\text{KTE}}^2 &= ||\hat{\mu}_{Y^*_{1}} - \hat{\mu}_{Y^*_{0}}||^2,
\end{align*}
where
\begin{align*}
    \hat{\mu}_{Y^*_{1}} := \frac{1}{n} \sum_{i = 1}^n \frac{A_i k(\cdot, Y_i)}{\pi(X_i)}, \quad \hat{\mu}_{Y^*_{0}} := \frac{1}{n} \sum_{i = 1}^n \frac{(1 - A_i) k(\cdot, Y_i)}{1 - \pi(X_i)}.
\end{align*}
Alternatively, we may also define an unbiased version of it. Again, under the null, these are degenerate two-sample U-statistics. Hence, \citet{muandet2021counterfactual} proposed a permutation-based approach for thresholding. 

The KTE considers the MMD between the mean embeddings of the counterfactual distributions. This is the cornerstone of the proposed distributional test, as will be exhibited in Section \ref{section:main_results}, which we extend in a doubly-robust manner to observational settings.

\subsection{Permutation-free inference using cross U-statistics} \label{subsection:cross-u-stats}

The previously mentioned permutation-based approach for obtaining the threshold for the MMD statistic (and hence the KTE statistic) comes with finite-sample validity \citep{gretton2012kernel}. 
The number of permutations $B$ used to find the empirical p-values generally varies from 100 to 1000. Consequently, the computational cost of finding a suitable threshold for the statistic is at least $O(B n^2)$. Such computational cost reduces the applicability of the approach, especially when time or computational resources are limited.

Driven by developing dimension-agnostic inference tools, \citet{kim2020dimension} presented a permutation-free approach to test null hypotheses of the form $H_0: \mu = 0$ against $H_1: \mu \neq 0$, where $\mu$ is the mean embedding of a distribution $\mathbb{P}$, based on the idea of sample splitting. If $X_1, ..., X_{2n} \sim \mathbb{P}$, the usual degenerate V-statistic considers 
\begin{equation*}
    \widehat{\text{MMD}} = || \hat{\mu}_{2n} ||^2,
\end{equation*}
where $\hat{\mu}_{2n} = \frac{1}{2n} \sum_{i = 1}^{2n} k(\cdot, X_i)$ (or the similar unbiased version). \citet{kim2020dimension} proposed to split the data in two and study
\begin{equation*}
    \widehat{\text{xMMD}} = \langle \hat{\mu}_n^A , \hat{\mu}_n^B  \rangle,
\end{equation*}
where $\hat{\mu}_n^A = \frac{1}{n} \sum_{i = 1}^{n} k(\cdot, X_i), \hat{\mu}_n^B = \frac{1}{n} \sum_{j = n + 1}^{2n} k(\cdot, X_j)$.
Denoting $U_i =  \langle k(\cdot, X_i), \hat{\mu}_n^B \rangle$, we have that 
\begin{equation*}
     \widehat{\text{xMMD}} = \frac{1}{n} \sum_{i = 1}^{n} U_i.
\end{equation*}
Under the null and some mild assumptions on the embeddings, \citet[Theorem~4.2]{kim2020dimension} proved that
\begin{equation*} \label{eq:xmmd_normal}
    \bar{x}\widehat{\text{MMD}} := \frac{\sqrt{n}\bar{U}}{\hat{S}_u} \stackrel{d}{\to} N(0,1),
\end{equation*}
where $\bar{U} = \frac{1}{n} \sum_{i = 1}^{n} U_i$, $\hat{S}_u^2 = \frac{1}{n} \sum_{i = 1}^{n} (U_i- \bar{U})^2$. Consequently, $\bar{x}\widehat{\text{MMD}}$ is the statistic considered and the null is rejected when $\bar{x}\widehat{\text{MMD}} > z_{1 - \alpha}$, where  $z_{1 - \alpha}$ is the $(1 - \alpha)$-quantile of $N(0,1)$. Such test avoids the need for computing the threshold so it reduces the computational cost by a factor of $\frac{1}{B}$, and it is minimax rate optimal in the $L^2$ distance and hence its power cannot be improved beyond a constant factor \citep{kim2020dimension}.

The permutation-free nature of cross U-statistic is key in the proposed distributional test. It will allow us to circumvent the need for training regressors and propensity scores repeatedly, while preserving theoretical guarantees.

\subsection{Empirical mean asymptotic behavior of AIPW} \label{subsection:dr_normal}

The main property from AIPW estimators that will be exploited in our proposed distributional treatment effect is the asymptotic empirical mean behaviour of the estimator. We present sufficient conditions in the next theorem. 

\begin{theorem} \label{theorem:dr_normal_original}
Let $f(x, a, y) = \{ \frac{a}{\pi(x)} - \frac{1 - a}{1 - \pi(x)} \} \{y - \theta_a(x) \} + \theta_1(x) - \theta_0(x)$, so that $\psi = \mathbb{E}\{ f(X, A, Y) \}$ is the average treatment effect. Suppose that
\begin{itemize}
    \item $\hat{f}$ is constructed from an independent sample or $f$ and $\hat{f}$ are contained in a Donsker class. 
    \item  $||\hat{f} - f|| = o_{\mathbb{P}}(1)$.
\end{itemize}
Suppose also that (by clipping) $\mathbb{P}(\hat{\pi} \in [ \epsilon, 1 - \epsilon ]) = 1$. If $|| \hat{\pi} - \pi || \sum_a ||\hat{\theta}_a - \theta_a|| = o_\mathbb{P}(\frac{1}{\sqrt{n}})$, then it follows that
$$\hat{\psi}_{\text{AIPW}} - \psi = (\mathbb{P}_n - \mathbb{P}) f(X, A, Y) + o_\mathbb{P}(\frac{1}{\sqrt{n}}),$$
so it is root-n consistent and and asymptotically normal.
\end{theorem}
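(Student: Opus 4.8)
The plan is to write the AIPW estimator as an empirical average of the estimated influence function, $\hat\psi_{\text{AIPW}} = \Pn \hat f$, and decompose
\begin{equation*}
\hat\psi_{\text{AIPW}} - \psi \;=\; \underbrace{(\Pn - \Pb) f}_{\text{(i)}} \;+\; \underbrace{(\Pn - \Pb)(\hat f - f)}_{\text{(ii)}} \;+\; \underbrace{(\Pb \hat f - \psi)}_{\text{(iii)}},
\end{equation*}
where in (iii) the expectation $\Pb\hat f$ is taken over a fresh draw $(X,A,Y)$ with the nuisance estimates held fixed. The goal is to show that (ii) and (iii) are both $o_\Pb(1/\sqrt n)$, so that the linear term (i) governs the asymptotics.

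For the empirical-process term (ii): if $\hat f$ is built from an independent split, condition on that split; then $\sqrt n (\Pn - \Pb)(\hat f - f)$ is a sum of i.i.d.\ mean-zero terms whose conditional variance is at most $\Pb (\hat f - f)^2 = \| \hat f - f \|^2 = o_\Pb(1)$, so Chebyshev's inequality gives $o_\Pb(1)$ conditionally, hence unconditionally. If instead $f$ and $\hat f$ lie in a fixed Donsker class, the same conclusion follows from asymptotic equicontinuity of the empirical process combined with $\| \hat f - f \| \to 0$ (standard empirical-process results for Donsker classes). Either way, (ii) $= o_\Pb(1/\sqrt n)$.

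The heart of the argument is term (iii), the conditional bias. Using iterated expectations conditional on $X$, the no-unmeasured-confounding assumption (so that $\E[Y \mid X, A = a] = \theta_a(X)$), overlap, and the explicit form of $f$, one derives the mixed-bias identity
\begin{equation*}
\Pb \hat f - \psi \;=\; \E\!\left[\frac{\pi(X) - \hat\pi(X)}{\hat\pi(X)}\bigl(\theta_1(X) - \hat\theta_1(X)\bigr)\right] \;+\; \E\!\left[\frac{\pi(X) - \hat\pi(X)}{1 - \hat\pi(X)}\bigl(\theta_0(X) - \hat\theta_0(X)\bigr)\right],
\end{equation*}
in which the first-order contributions of $\hat\pi - \pi$ and of $\hat\theta_a - \theta_a$ cancel, leaving only the products. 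Applying the Cauchy--Schwarz inequality to each summand and using the clipping assumption $\hat\pi \in [\epsilon, 1 - \epsilon]$ almost surely to bound the weights by $1/\epsilon$, we obtain
\begin{equation*}
\left| \Pb \hat f - \psi \right| \;\le\; \frac{1}{\epsilon}\, \| \hat\pi - \pi \| \, \sum_{a} \| \hat\theta_a - \theta_a \| \;=\; o_\Pb\!\left(\tfrac{1}{\sqrt n}\right)
\end{equation*}
by hypothesis, so (iii) $= o_\Pb(1/\sqrt n)$.

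Combining the three pieces yields $\hat\psi_{\text{AIPW}} - \psi = (\Pn - \Pb) f(X, A, Y) + o_\Pb(1/\sqrt n)$, the claimed expansion; root-$n$ consistency and $\sqrt n (\hat\psi_{\text{AIPW}} - \psi) \indist N\bigl(0, \var\{ f(X,A,Y) \}\bigr)$ then follow by applying the central limit theorem to the fixed function $f$, whose variance is finite by overlap. The main obstacle is the derivation of the mixed-bias identity for (iii): it requires careful conditioning on $X$, repeated use of consistency and no-unmeasured-confounding to replace conditional expectations of $Y$ by $\theta_a(X)$, and enough bookkeeping of the cross terms that every first-order piece cancels and only the second-order product survives. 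The empirical-process step (ii) is comparatively routine once either the sample-splitting or the Donsker hypothesis is invoked.
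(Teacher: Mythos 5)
Your decomposition into $(\Pn-\Pb)f$, $(\Pn-\Pb)(\hat f - f)$, and $\Pb(\hat f - f)$, with the empirical-process term handled by conditioning on the independent split (Chebyshev) or by Donsker asymptotic equicontinuity, and the bias term reduced to the mixed product $\E[(\pi-\hat\pi)(\theta_a-\hat\theta_a)/\hat\pi]$ bounded via Cauchy--Schwarz and the clipping of $\hat\pi$, is exactly the argument the paper gives in Appendix C. The proposal is correct and takes essentially the same approach as the paper's proof.
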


Note that the IPW estimator can be seen as an AIPW with $\hat{\theta}_0(X) = \hat{\theta}_1(X) = 0$ almost surely. The IPW estimator is also asymptotically normal if $\| \hat{\pi} - \pi \| = o_\mathbb{P}(\frac{1}{\sqrt{n}})$. In experimental settings $\| \hat{\pi} - \pi \| = 0$, hence the root-n rate is achieved. Under certain conditions (e.g., consistent nuisance estimation at $n^{-1/4}$ rates), the asymptotic variance of the AIPW estimator is minimized for $\hat\theta_1 = \theta_1$,  $\hat\theta_0 = \theta_0$, thus the IPW estimator is generally dominated by the AIPW if $\hat\theta_1, \hat\theta_0$ are consistent. 

The idea exhibited in Theorem \ref{theorem:dr_normal_original} will allow for using cross U-statistics in estimated mean embeddings, rather than the actual embeddings. Nonetheless, Theorem \ref{theorem:dr_normal_original} applies to finite-dimensional outcomes $Y$. We state and prove the extension of Theorem \ref{theorem:dr_normal_original} to Hilbert spaces in Appendix \ref{appendix:proofs}, which will be needed to prove the main result of this work.

\section{Main results} \label{section:main_results}

We are now ready to introduce the main result of the paper. Let $Z \equiv (X, A, Y) \sim \mathbb{P}$ be such that $Y = A Y^*_1 + (1 - A) Y^*_0$ and that both no unmeasured confounding and overlap assumptions hold. We denote the space of observations by $\mathcal{Z}=\mathcal{X}\times\mathcal{A}\times\mathcal{Y}$. We are given $Z_i \equiv (X_i, A_i, Y_i)_{i = 1}^{2n} \sim (X, A, Y)$ and we wish to test $H_0: P_{Y^*_1} = P_{Y^*_0}$ against $H_1: P_{Y^*_1} \neq P_{Y^*_0}$. 
Given characteristic kernel $k$ i.e. $k(y, \tilde y) = \langle k(\cdot, y), k(\cdot, \tilde y) \rangle$ with induced RKHS $\mathcal{H}_k$, we equivalently test $H_0: \E[k(\cdot, Y^*_1) - k(\cdot, Y^*_0)] = 0$.

Under consistency, no unmeasured confounding, and overlap, we have 
\begin{align*}
    \E[k(\cdot, Y^*_1) &- k(\cdot, Y^*_0)] = \E[\phi(Z)],
\end{align*}
where
\begin{align*}
    \phi(z) &= \{ \frac{a}{\pi(x)} - \frac{1 - a}{1 - \pi(x)} \} \{k(\cdot, y )- \beta_a(x) \} + \beta_1(x) - \beta_0(x),\\
     \pi(x) &= \E[A \mid X = x],\quad \beta_a(x) = \E[k(\cdot, Y) \mid A = a, X = x] .
\end{align*}

Note the change in notation, from $\theta_a$ to $\beta_a$, to emphasize that such regression functions are now $\mathcal{H}_k$-valued. Thus, we can equivalently test for $H_0: \E[\phi(Z)] = 0$. With this goal in mind,  we denote $\mathcal{D}_1 = (X_i, A_i, Y_i)_{i = 1}^{n}, \mathcal{D}_2 = (X_j, A_j, Y_j)_{j = n+1}^{2n}$ and define 
\begin{align} \label{align:thdag}
    T_h^\dag &:= \frac{\sqrt{n}\bar{f}^\dag_h}{S^\dag_h}
\end{align}
where
\begin{align*}
    f_h^\dag(Z_i) &= \frac{1}{n} \sum_{j = n + 1}^{2n} \langle \hat{\phi}^{(1)}(Z_i), \hat{\phi}^{(2)}(Z_j) \rangle, \; i \in [n]. \nonumber
\end{align*}
Above,
$\hat \phi ^{(r)}(z)$ is the plug-in estimate of $\phi(z)$ for $r \in \{1, 2\}$ using
$\hat{\pi}^{(r)}$ and $\hat{\beta}_a^{(r)}$, which approximate $\pi$ and $\beta_a$ respectively. Further, $\bar{f}_h^\dag$ and ${S_h^{\dag}}$ denote the empirical mean and standard error of $f_h^\dag$:
\begin{align*}
    \bar{f}_h^\dag = \frac{1}{n} \sum_{i = 1}^{n} f_h^\dag(Z_i), \quad {S_h^\dag} = \sqrt{\frac{1}{n} \sum_{i = 1}^{n} (f_h^\dag(Z_i) - \bar{f}^\dag_h)^2}.
\end{align*}

The next theorem, which is the main result of the paper, establishes sufficient conditions for $T_h^\dag$ to present Gaussian asymptotic behavior. While the main idea relies on combining cross U-statistics and the asymptotic empirical mean-like behaviour of AIPW estimators, we highlight a number of technical challenges underpinning this result. The proof combines the central idea presented in \citet{kim2020dimension} with a variety of techniques including causal inference results, functional data analysis, and kernel method concepts. Furthermore, additional work is needed to extend Theorem~\ref{theorem:dr_normal_original} to $\mathcal{H}_k$-valued outcomes. Donsker classes are only defined for finite dimensional outcomes; in the $\mathcal{H}_k$-valued scenario, we ought to refer to asymptotically equicontinuous empirical processes \citep{park2022towards} and Glivenko-Cantelli classes. We refer the reader to Appendix \ref{appendix:proofs} for a presentation of such concepts, clarification of the norms used, and the proof of the theorem. 

\begin{theorem}
\label{theorem:dr_gaussian_kte} 
Let $k$ be a kernel that induces a separable RKHS and $\mathbb{P}_{Y_0^*}, \mathbb{P}_{Y_1^*}$ be two distributions. Suppose that (i) $\mathbb{E}[\phi(Z)] = 0$, (ii) $\mathbb{E}\left[\langle \phi(Z_1), \phi(Z_2) \rangle^2\right] > 0$, (iii) $\mathbb{E}\left[\| \phi(Z) \|_\mathcal{H}^4\right]$ is finite.
For $r \in \{1, 2\}$, suppose that (iv) $\hat{\phi}^{(r)}$ is constructed independently from $\mathcal{D}_r$ or (v) the empirical process of $\hat{\phi}^{(r)}$ is asymptotically equicontinuous at $\phi$ and $\| \hat{\phi}^{(r)} \|_\mathcal{H}^2$ belongs to a Glivenko-Cantelli class.
If it also holds that (vi) $||\hat{\phi}^{(r)} - \phi|| = o_{\mathbb{P}}(1)$, (vii) $\mathbb{P}(\hat{\pi}^{(r)} \in [ \epsilon, 1 - \epsilon ]) = 1$, and
\begin{align} \label{equation:theorem_proper_rate}
    || \hat{\pi}^{(r)} - \pi || \sum_a ||\hat{\beta}_a^{(r)} - \beta_a|| = o_\mathbb{P}(\frac{1}{\sqrt{n}})
\end{align} 
for $r \in \{1, 2\}$, then it follows that
\begin{align*}
    T_h^\dag \stackrel{d}{\to} N(0, 1).
\end{align*}
\end{theorem}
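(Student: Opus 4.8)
The plan is to mimic the cross U-statistic argument of \citet[Theorem~4.2]{kim2020dimension}, but with the complication that we work with \emph{estimated} embeddings $\hat\phi^{(r)}$ rather than the true influence-function-valued object $\phi$. The first step is to establish the $\mathcal{H}_k$-valued analogue of Theorem~\ref{theorem:dr_normal_original}: under conditions (iv)/(v), (vi), (vii), and the rate condition \eqref{equation:theorem_proper_rate}, one has for each half of the sample
\begin{align*}
    \frac{1}{n}\sum_{i} \hat\phi^{(r)}(Z_i) - \E[\phi(Z)] = (\Pb_n - \Pb)\phi(Z) + o_\Pb(n^{-1/2}),
\end{align*}
with the $o_\Pb$ remainder measured in $\mathcal{H}_k$-norm. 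This is the ``extension'' promised in the text: decompose $\frac1n\sum_i\hat\phi^{(r)}(Z_i) - \E\phi = (\Pb_n-\Pb)\phi + (\Pb_n-\Pb)(\hat\phi^{(r)}-\phi) + \Pb(\hat\phi^{(r)}-\phi)$. The empirical-process term $(\Pb_n-\Pb)(\hat\phi^{(r)}-\phi)$ is controlled either by sample splitting (case (iv), where $\hat\phi^{(r)}$ is independent of $\mathcal{D}_r$ so a conditional Chebyshev/Markov bound with \eqref{vi} suffices) or by asymptotic equicontinuity plus the Glivenko--Cantelli assumption (case (v)); and the bias term $\Pb(\hat\phi^{(r)}-\phi)$ is the usual AIPW second-order ``product of errors'' term, which is $o_\Pb(n^{-1/2})$ precisely by \eqref{equation:theorem_proper_rate} together with the clipping in (vii).

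The second step is to transfer this to the cross statistic. Write $\hat\mu_1 = \frac1n\sum_{i\le n}\hat\phi^{(1)}(Z_i)$ and $\hat\mu_2 = \frac1n\sum_{n<j\le 2n}\hat\phi^{(2)}(Z_j)$, so that $\bar f_h^\dag = \langle \hat\mu_1,\hat\mu_2\rangle$. Using Step~1 on each half (with $\E\phi = 0$ under $H_0$ by (i)), we have $\hat\mu_r = \tilde\mu_r + o_\Pb(n^{-1/2})$ where $\tilde\mu_r := \frac1n\sum \phi(Z_\bullet)$ is the \emph{oracle} empirical mean of the true influence functions over the respective half. Then
\begin{align*}
    \langle\hat\mu_1,\hat\mu_2\rangle = \langle\tilde\mu_1,\tilde\mu_2\rangle + \langle o_\Pb(n^{-1/2}),\tilde\mu_2\rangle + \langle\tilde\mu_1,o_\Pb(n^{-1/2})\rangle + o_\Pb(n^{-1}),
\end{align*}
and since $\|\tilde\mu_r\|_\mathcal{H} = O_\Pb(n^{-1/2})$ (mean zero, finite second moment by (iii)), every correction term is $o_\Pb(n^{-1})$. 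So $\bar f_h^\dag = \langle\tilde\mu_1,\tilde\mu_2\rangle + o_\Pb(n^{-1})$, exactly the oracle cross-MMD statistic up to lower order. One must do the same bookkeeping for the variance estimator $S_h^\dag$: defining the oracle summands $\tilde U_i = \langle\phi(Z_i),\tilde\mu_2\rangle$, show $\frac1n\sum(f_h^\dag(Z_i)-\bar f_h^\dag)^2 = \frac1n\sum(\tilde U_i - \bar{\tilde U})^2 + o_\Pb(n^{-1})$ using (iii) (finite fourth moment) and the uniform control from Step~1; this is the most calculation-heavy sub-step but is routine once Step~1 is in hand.

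The third step is to invoke \citet[Theorem~4.2]{kim2020dimension} (or re-run its proof) on the oracle quantities: conditionally on $\mathcal{D}_2$, the $\tilde U_i$ are i.i.d.\ mean-zero with conditional variance $\|\tilde\mu_2\|^2\cdot(\text{something})$, more precisely $\sigma_n^2 := \var(\langle\phi(Z_1),\tilde\mu_2\rangle\mid\mathcal{D}_2)$, which is strictly positive asymptotically by (ii) (this is exactly why (ii) is assumed: it guarantees the studentized statistic does not degenerate), and finite by (iii); a Berry--Esseen / Lindeberg argument gives $\sqrt n\,\bar{\tilde U}/\tilde S_u \indist N(0,1)$, and the $o_\Pb(n^{-1})$ discrepancies from Step~2 do not affect the limit after dividing by $S_h^\dag \asymp n^{-1/2}\sigma_n \asymp 1$ in the studentized ratio. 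Combining, $T_h^\dag = \sqrt n\,\bar f_h^\dag / S_h^\dag \indist N(0,1)$. The main obstacle is Step~1, the Hilbert-space extension of the AIPW linearization: Donsker theory is not available for $\mathcal{H}_k$-valued maps, so controlling $(\Pb_n-\Pb)(\hat\phi^{(r)}-\phi)$ in norm requires the asymptotic-equicontinuity machinery of \citet{park2022towards} and care about which norms ($\mathcal{H}_k$-norm pointwise in $z$, versus $L^2(\Pb)$ of that norm) appear in (vi) and \eqref{equation:theorem_proper_rate}; everything downstream is the comparatively standard cross U-statistic CLT plus Slutsky bookkeeping.
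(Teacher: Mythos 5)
Your overall architecture matches the paper's proof: establish the $\mathcal{H}$-valued AIPW linearization (sample splitting or asymptotic equicontinuity for the empirical-process term, the product-of-errors bound via \eqref{equation:theorem_proper_rate} and the clipping in (vii) for the bias term), transfer it to the numerator and the variance of the cross statistic, and invoke the oracle cross U-statistic CLT of \citet{kim2020dimension} together with Slutsky. Your reading of condition (ii) as the non-degeneracy guarantee is also correct; the paper makes it quantitative via the spectral decomposition of $C=\E[\phi(Z)\otimes\phi(Z)]$ and the Karhunen--Lo\`{e}ve expansion to show $1/\E[n f_h^2(Z)\mid\mathcal{D}_2]=O_\Pb(1)$.

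The one genuine gap is your variance step. You assert that $\frac1n\sum_i(f_h^\dag(Z_i)-\bar f_h^\dag)^2=\frac1n\sum_i(\tilde U_i-\bar{\tilde U})^2+o_\Pb(n^{-1})$ is ``routine once Step 1 is in hand,'' but Step 1 only controls $\bigl\|\frac1n\sum_i(\hat\phi^{(1)}(Z_i)-\phi(Z_i))\bigr\|_\mathcal{H}$, i.e., the norm of the \emph{average} of the errors. The second-moment term $\frac1n\sum_i\langle\hat\phi^{(1)}(Z_i),\hat\tau_2\rangle^2$ instead requires $\frac1n\sum_i\|\hat\phi^{(1)}(Z_i)-\phi(Z_i)\|_\mathcal{H}^2=o_\Pb(1)$, the average of the \emph{squared norms}, which does not follow from Step 1: cancellation can make the averaged error small in $\mathcal{H}$ while the individual errors remain large. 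This is precisely why condition (v) carries the extra clause that $\|\hat\phi^{(r)}\|_\mathcal{H}^2$ belongs to a Glivenko--Cantelli class --- you invoke that clause in Step 1, where it is not needed (asymptotic equicontinuity alone is, by definition, the statement that $(\Pn-\Pb)(\hat\phi^{(r)}-\phi)=o_\Pb(n^{-1/2})$), and omit it in Step 2, where it is the essential ingredient; under (iv) the analogous control comes from conditional Markov applied to $\Pn\|\hat\phi^{(1)}-\phi\|_\mathcal{H}^2$ given the training sample, using (vi). Once that is supplied, the rest of your bookkeeping (Cauchy--Schwarz against $\|\tau_2\|_\mathcal{H}^2=O_\Pb(1)$ and $\frac1n\sum_i\|\phi(Z_i)\|_\mathcal{H}^2=O_\Pb(1)$ from (iii)) goes through exactly as in the paper.
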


We would like to highlight the mildness of the assumptions of Theorem \ref{theorem:dr_gaussian_kte}. The separability of the RKHS is achieved for any continuous kernel on separable $\mathcal{Y}$ \citep{hein2004kernels}. Assumption (i) is always attained under the null hypothesis (it is precisely the null hypothesis). Assumption (ii) prevents $\phi(Z)$ from being constant. In such degenerate case, ${S_h^\dag} = 0$ thus $T_h^\dag$ is not even well-defined. Assumption (iii) is the more restrictive out of the first three assumptions, inherited from the use of Lyapunov's CLT in the proof. However, we note that this condition is immediately satisfied under frequently used kernels. For instance, under bounded kernels (for example the common Gaussian and Laplace kernels) such that $\| k(\cdot, Y) \|_\mathcal{H} \leq M$, we have that 
\begin{align*}
    \| \beta_a(x) \|_\mathcal{H}  &=  \| \E[k(\cdot, Y) | A = a, X = x] \|_\mathcal{H}\leq \E[ \| k(\cdot, Y) \|_\mathcal{H}  | A = a, X = x] \leq M,
\end{align*}
hence
\begin{align*}
    \| \phi(z) \|_\mathcal{H} \leq \; &\epsilon^{-1} \| k(\cdot, Y) \|_\mathcal{H} + \epsilon^{-1} \max\left(\| \beta_1(x)\|_\mathcal{H},\|\beta_0(x)\|_\mathcal{H} \right)  + \| \beta_1(x) \|_\mathcal{H} + \| \beta_0(x) \|_\mathcal{H},
\end{align*}
which is upper bounded by $2 (\epsilon^{-1} + 1) M$. Consequently, $\mathbb{E}\left[\| \phi(Z)\|_\mathcal{H}^4\right]^2 \leq \left[2 (\epsilon^{-1} + 1) M\right]^8$.

Furthermore, conditions (iv), (vi), (vii) and \eqref{equation:theorem_proper_rate} deal with the proper behaviour of the AIPW estimator; they are standard in the causal inference scenario. Condition (iv) is equivalent to two-fold cross-fitting i.e., training $\hat\phi^{(r)}$ on only half of the data and evaluating such an estimator on the remaining half. Condition (v) replaces the Donsker class condition from the finite dimensional setting. 

We emphasize the importance of double-robustness in the test; normality of the statistic is achieved due to the $o_\mathbb{P}(1/\sqrt{n})$ rate, which is possible in view of the doubly robust nature of the estimators. We also highlight the fact that IPW estimators of the form $\hat{\phi}^{(r)}(z) = \{ \frac{a}{\hat{\pi}(x)} - \frac{1 - a}{1 - \hat{\pi}(x)} \} k(\cdot, y )$ can be embedded in the framework considering $\beta_0, \beta_1 = 0$. In fact, the doubly robust kernel mean embedding estimator may be viewed as an augmented version of the KTE (which is a kernelized IPW) using regression approaches to kernel mean embeddings \citep{singh2020kernel2}, just as AIPW augments IPW with regression approaches. Furthermore, \eqref{equation:theorem_proper_rate} is always attained when the propensity scores $\pi$ are known (i.e. experimental setting), given that $\| \hat{\pi}^{(r)} - \pi \| = 0$. 

Based on the normal asymptotic behaviour of $T_h^\dag$, we propose to test the null hypothesis $H_0: P_{Y^*_1} = P_{Y^*_0}$ given the p-value $p=1 - \Phi(T_h^\dag)$, where $\Phi$ is the cdf of a standard normal. For an $\alpha$-level test, the test rejects the null if $p\leq\alpha$.  We consider a one-sided test, rather than studying the two-sided p-value $1 - \Phi(|T_h^\dag|)$, given that positive values of $T_h^\dag$ are expected for $\E[\phi(Z)] \neq 0$. The next algorithm illustrates the full procedure of the test, which we call AIPW-xKTE (Augmented Inverse Propensity Weighted cross Kernel Treatment Effect). 

\begin{algorithm} []
    \caption{AIPW-xKTE}
  \begin{algorithmic}[1]
    \STATE \textbf{input} Data $\mathcal{D} = (X_i, A_i, Y_i)_{i = 1}^{2n}$.
    \STATE \textbf{output} The p-value of the test.
    \STATE Choose kernel $k$ and estimators $\hat{\beta}_a^{(r)}$, $\hat{\pi}^{(r)}$ for $r \in \{1, 2\}$.
    \STATE Split data in two sets  $\mathcal{D}_1 = (X_i, A_i, Y_i)_{i = 1}^{n}, \mathcal{D}_2 = (X_i, A_i, Y_i)_{i = n+1}^{2n}$.
    \STATE If  $\hat{\beta}_a^{(r)}$, $\hat{\pi}^{(r)}$ are such that condition (v) from Theorem~\ref{theorem:dr_gaussian_kte} is attained, train them on $\mathcal{D}$. Otherwise, train them on $\mathcal{D}_{1-r}$. 
    \STATE Define $\hat{\phi}^{(r)}(z) = \{ \frac{a}{\hat{\pi}^{(r)}(x)} - \frac{1 - a}{1 - \hat{\pi}^{(r)}(x)} \} \{k(\cdot, y )- \hat{\beta}^{(r)}_a(x) \} + \hat{\beta}^{(r)}_1(x) - \hat{\beta}^{(r)}_0(x)$.
    \STATE Define $f_h^\dag(Z_i) = \frac{1}{n} \sum_{j = n + 1}^{2n} \langle \hat{\phi}^{(1)}(Z_i), \hat{\phi}^{(2)}(Z_j) \rangle$ for $i = 1, \ldots, n$.
    \STATE Calculate $T_h^\dag := \frac{\sqrt{n}\bar{f}^\dag_h}{S^\dag_h}$, where $    \bar{f}_h^\dag = \frac{1}{n} \sum_{i = 1}^{n} f_h^\dag(Z_i)$ and ${S_h^\dag} = \sqrt{\frac{1}{n} \sum_{i = 1}^{n} (f_h^\dag(Z_i) - \bar{f}^\dag_h)^2}.$
    \STATE \textbf{return} p-value $p=1 - \Phi(T_h^\dag)$.
  \end{algorithmic}
\end{algorithm}

Note that the proposed statistic is, at heart, a two sample test (with a nontrivial causal twist); in contrast to \citet{shekhar2022permutation}, the two samples are not independent and are potentially confounded. 

Extensive literature focuses on designing estimator $\hat{\pi}^{(r)}$, logistic regression being the most common choice. At this time, not so many choices exist for estimators $\hat\beta_a^{(r)}$, given that it involves a regression task in a Hilbert space. Conditional mean embeddings are the most popular regressor, although other choices exist \citep{cevid2022distributional}.

Note that we have motivated the proposed procedure for testing distributional treatment effects with characteristic kernels. However, the actual null hypothesis being tested is $H_0: \E[k(\cdot, Y^*_1)]=\E[k(\cdot, Y^*_0)]$. If the kernel chosen is not characteristic, the test would continue to be valid for $H_0$, although it would not be valid to test equality between $\mathbb{P}_{Y_0^*}$ and $\mathbb{P}_{Y_1^*}$. For instance, AIPW-xKTE with a linear kernel could be used to test equality in means of  counterfactuals.

Furthermore, the proposed test is permutation-free, as the statistic $T_h^\dag$ ought to be computed only once. This permutation-free nature is crucial, as it avoids the repeated estimation of $\hat{\pi}^{(r)}, \hat{\beta}_a^{(r)}$. For instance, conditional mean embeddings involve the inversion of a matrix, which scales at least at $O(n^\omega)$, with practical values being $\omega = 2.87$ by Strassen's algorithm \citep{strassen1969gaussian}. Calculating the conditional mean embedding for every permutation would imply $O(Bn^\omega)$, where $B$ is the number of permutations. Furthermore, regressors for mean embeddings of different nature might involve a higher complexity, hence avoiding permutations becomes even more important in the approach.

If the actual embedding $\phi$ was known, the power of AIPW-xKTE could not be improved beyond a constant factor (by minimax optimality of cross U-statistics in $L^2$ distance). Further, every procedure will suffer from the error in estimation of $\phi$. This means that we are potentially incurring in a loss of power by avoiding a permutation-based approach, however such a loss is controlled by a small factor. Nonetheless, this potential loss is inherited from splitting the data in our estimator (only half of the data is used on each side of the inner product). We highlight that sample splitting is needed when using flexible doubly-robust estimators, hence we expect no loss in power compared to other potential doubly-robust approaches in that case.  


\section{Experiments} \label{section:experiments}

In this section, we explore the empirical calibration and power of the proposed test AIPW-xKTE. For this, we assume that we observe $(x_i, a_i, y_i)_{i = 1}^{n} \sim (X, A, Y)$ and that (causal inference assumptions) consistency, no unmeasured confounding, and overlap hold. Both synthetic data and real data are evaluated. All the tests are considered at a 0.05 level. For an exhaustive description of the simulations and outcomes, including additional experiments, we direct the reader to Appendix \ref{appendix:experiments}.

\textbf{Synthetic data.} All  data (covariates, treatments and responses) are artificially generated. We define four scenarios:
\begin{itemize}
    \item Scenario I: There is no treatment effect; thus, $\mathbb{P}_{Y_0^*} = \mathbb{P}_{Y_1^*}$.
    \item Scenario II: There exists a treatment effect  that only affects the means of $\mathbb{P}_{Y_0^*}, \mathbb{P}_{Y_1^*}$.
    \item Scenario III and Scenario IV: There exists a treatment effect that does not affect the means but only affects the higher moments of $\mathbb{P}_{Y_0^*}$ and $\mathbb{P}_{Y_1^*}$, differently for each scenario.
\end{itemize}

For all four scenarios, we consider the usual observational study setting, where the propensity scores $\pi(X)$ are treated as unknown and hence they must be estimated. We define the proposed AIPW-xKTE test with the mean embedding regressions fitted as conditional mean embeddings and the propensity scores estimated by logistic regression.

We first study the empirical calibration of AIPW-xKTE and the Gaussian behaviour of $T_h^\dag$ under the null. Figure~\ref{fig:null_dr} exhibits the performance of AIPW-xKTE in Scenario I. Both a standard normal behaviour and proper calibration are empirically attained in the simulations.  

\begin{figure*}[!b] 
  \includegraphics[width=\textwidth]{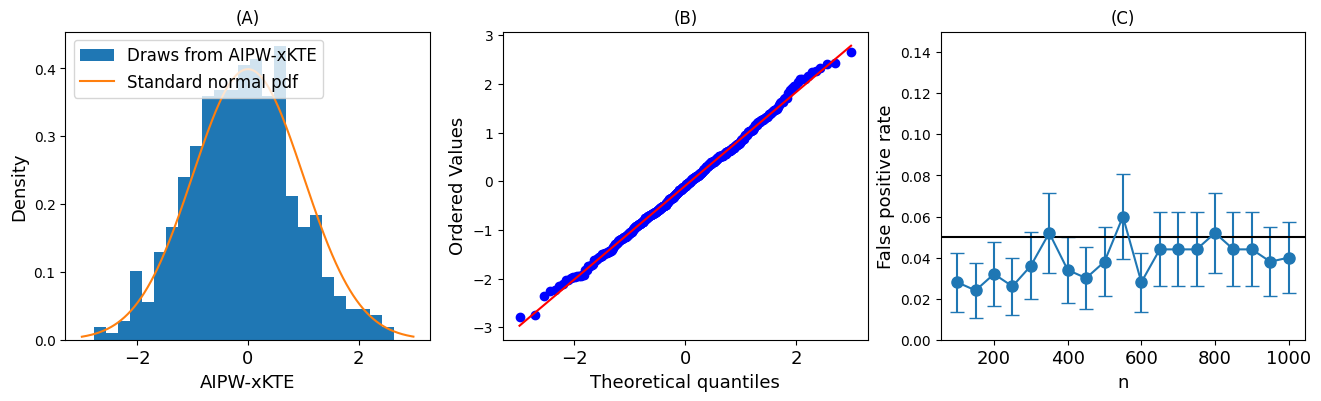}
  \caption{Illustration of 500 simulations of the AIPW-xKTE under the null: (A) Histogram of AIPW-xKTE alongside the pdf of a standard normal for $n=500$, (B) Normal Q-Q plot of AIPW-xKTE for $n=500$, (C) Empirical size of AIPW-xKTE against different sample sizes. The figures show the Gaussian behaviour of the statistic under the null, which leads to a well calibrated test.}
  \label{fig:null_dr}
\end{figure*}

Due to the fact that the KTE \citep{muandet2021counterfactual} may not be used in the observational setting, where the propensity scores are not known, there is no natural benchmark for the proposed test. In particular, we were unable to control the type-1 error of the test presented in \citet{fawkes2022doubly}, and hence omitted from our simulations. Consequently, we compare the power of the proposed AIPW-xKTE and IPW-xKTE with respect to three methods that are widely used while conducting inference on the average treatment effect: Causal Forests \citep{wager2018estimation}, Bayesian Additive Regression Trees (BART) \citep{hahn2020bayesian}, and a linear regression based AIPW estimator (Baseline-AIPW). 

Figure \ref{fig:power_234} exhibits the performance of such tests in Scenario II, Scenario III, and Scenario IV. The three methods dominate AIPW-xKTE in Scenario II, where there exists a mean shift in counterfactuals. However, and as expected, such methods show no power if the distributions differ but have equal means. In contrast, AIPW-xKTE detects distributional changes beyond the mean, exhibiting power in all scenarios. 

\textit{Remark:} While this work focuses on the observational setting, where double robustness is crucial, we highlight that the proposed AIPW-xKTE test may also be used in experiments (where propensity scores are known) for computational gains. The proposed test avoids permutations, which makes it more computationally efficient than the KTE \citep{muandet2021counterfactual}. We refer the reader to Appendix \ref{appendix:comparison} for a comparison between the proposed AIPW-xKTE and the KTE.

\begin{figure*}[!t] 
  \includegraphics[width=\textwidth]{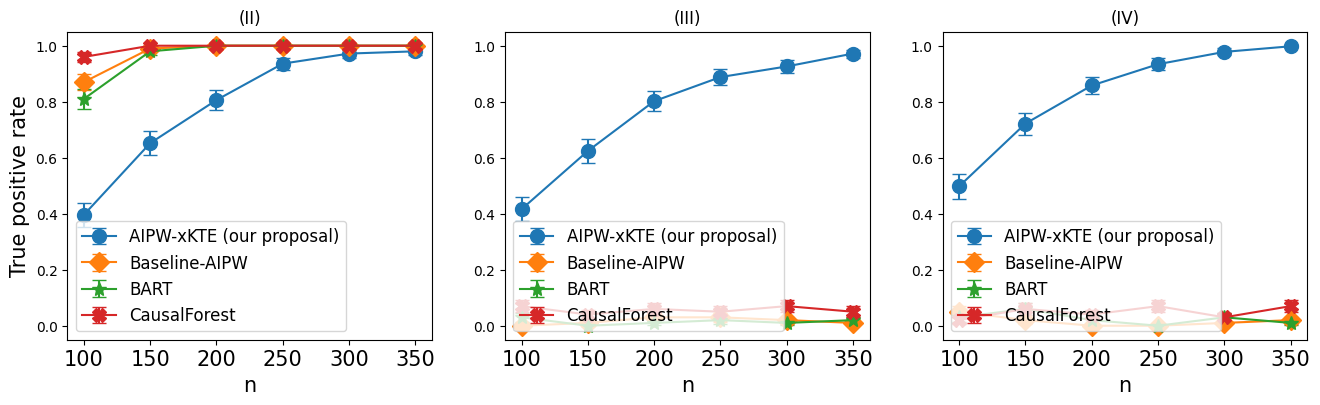}
  \caption{True positive rates of 500 simulations of the tests in Scenarios II, III, and IV. AIPW-xKTE shows notable true positive rates in every scenario, unlike competitors.}
  \label{fig:power_234}
\end{figure*}


\textbf{Real data.} We use data obtained from the Infant Health and Development Program (IHDP) and compiled by \citet{hill2011bayesian}, in which the covariates come from a randomized
experiment studying the effects of specialist home visits on
cognitive test scores. This data has seen extensive use in causal inference
\citep{johansson2016learning, louizos2017causal, shalit2017estimating}. We work with 18 variables of the covariate set and unknown propensity scores. We highlight that the propensity score model is likely misspecified in this real life scenario.

We consider six scenarios with the IHDP data. For Scenarios I, II, III and IV, we generate the response variables similarly to the previous experiments. In Scenario V, we take the IQ test (Stanford Binet) score measured at the end of the intervention (age 3) as our response variable. For Scenario VI, we calculate the average treatment effect in Scenario V using Causal Forests (obtaining 0.003 i.e. a positive shift) and subtract it from the IQ test score of those who are treated, thus obtaining a distribution of the IQ test scores with zero average treatment effect.

\begin{table}[!b]
\begin{center}

\caption{True positive rates ($\pm$ std) for the different scenarios and tests using the IHDP data. While AIPW-xKTE shows less power in Scenario II, it outperforms its competitors in Scenarios III and IV.}

\vspace{.3cm}

\begin{tabular}{cccccc}
\toprule
{Test}&&& \makebox[0pt][c]{Scenario} \\
 \cmidrule{2-6}
{} &   II &  III &   IV &    V &   VI \\
\midrule
AIPW-xKTE      & 0.44 $\pm$ 0.05 & 0.34 $\pm$ 0.05 & 0.53 $\pm$ 0.05 & 0.99 $\pm$ 0.01 & 0.03 $\pm$ 0.02 \\
Baseline-AIPW   & 0.95 $\pm$ 0.02 & 0.00 $\pm$ 0.00 & 0.00 $\pm$ 0.00 & 1.00 $\pm$ 0.00 & 0.00 $\pm$ 0.00 \\
BART         & 0.77 $\pm$ 0.04 & 0.15 $\pm$ 0.04 & 0.24 $\pm$ 0.04 & 0.10 $\pm$ 0.03 & 0.04 $\pm$ 0.02 \\
CausalForest & 1.00 $\pm$ 0.00 & 0.04 $\pm$ 0.02 & 0.05 $\pm$ 0.02 & 1.00 $\pm$ 0.00 & 0.00 $\pm$ 0.00 \\
\bottomrule
\end{tabular}

\label{table:ihdp}
\end{center}
\end{table}

For each of the scenarios, we consider the tests AIPW-xKTE, Causal Forests, BART and Baseline-AIPW on 500 bootstrapped subsets to estimate the rejection rates. All of them showed expected levels of rejection under the null i.e. Scenario I. The results for the remaining scenarios can be found in Table \ref{table:ihdp}. We note that the performance of the tests on Scenarios I, II, III, and IV is similar to the analogous scenarios on synthetic data. While AIPW-xKTE exhibits a loss in power with respect to the other tests when the average treatment effect is non zero, it detects distributional changes beyond the mean.

Besides BART, all of the tests reject the null in almost every simulation of Scenario V. As expected, Causal Forest, BART and Baseline-AIPW barely reject the null in Scenario VI, which considers the data of Scenario V with zero average treatment effect. Interestingly, the proposed distributional test has a rejection rate below 0.05, which supports the fact that specialist home visits have no effect on the distribution of cognitive test scores beyond an increase in the mean.

\section{Conclusion and future work}

We have developed a computationally efficient kernel-based test for distributional treatment effects. It is, to our knowledge, the first kernel-based test for distributional treatment effects that allows for a doubly-robust approach with provably valid type-I error. Furthermore, it does not suffer from the computational costs inherent to permutation-based tests. The proposed test empirically proves valid in the observational setting, where its predecessor KTE may not be used: the test is well calibrated and shows power in a variety of scenarios. Procedures designed to test for average treatment effects only outperform the proposed test if there is a mean shift between counterfactuals.

There are several possible avenues for future work. We highlighted that our procedure holds if consistency, no unmeasured confounding, and overlap hold. However, this may often not be the case in observational studies. Generalizing the work to other causal inference frameworks, for example by considering instrumental variables, would be of interest. Exploring the extension of this work to test for conditional treatment effects could also be a natural direction to follow. Lastly, we expect state-of-the-art regressors for kernel mean embeddings, such as distributional random forests \citep{cevid2022distributional}, to find outstanding use in our estimator. We envisage that this may motivate the development of flexible mean embedding estimators designed for complex data, such as image or text.

\section*{Acknowledgements}
The project that gave rise to these results received the support of a fellowship from `la Caixa' Foundation (ID 100010434). The fellowship code is LCF/BQ/EU22/11930075.


\bibliographystyle{apalike}
\bibliography{bib}


\newpage
\appendix
\section{Comparison between KTE and AIPW-xKTE in the experimental setting} \label{appendix:comparison}

In Section \ref{section:experiments}, we exhibited the performance of the proposed AIPW-xKTE test in the observational setting. We emphasize that the proposed AIPW-xKTE test is designed to target the observational setting, where the KTE may not be used given that the propensity scores are unknown. In such a setting, the double-robustness of the proposed test is fully exploited. However, the proposed test may also be used in the experimental setting for its computational efficiency: the AIPW-xKTE test avoids permutations, which makes it more computationally efficient than the KTE.

Consequently, we explore in this appendix the performance of the AIPW-xKTE in the experimental setting. We consider a usual experimental design where $\frac{n}{2}$ units are treated and $\frac{n}{2}$ units are not treated so that $\pi(X) = \frac{1}{2}$. We compare the power of the proposed AIPW-xKTE with respect to the KTE test. Further, we include an IPW version of the proposed test  (by taking $\beta_0=\beta_1=0$, see comments in Section \ref{section:main_results}) in order to elucidate the interest in considering the doubly robust version. We refer the reader to Appendix \ref{appendix:experiments} for an exhaustive description of the experiments.

Figure \ref{fig:power_234_experimental} exhibits the performance of the tests in Scenario II, Scenario III, and Scenario IV. We note that IPW-xKTE has less power than the KTE, which is its analogous permutation-based version. This is due to the controlled loss in power of the cross U-statistic approach following the comments in Section \ref{section:main_results}. However, we see that the AIPW-xKTE competes with the KTE, despite showing slightly less power. The loss in power inherited by the permutation-free approach is compensated by the gain in power due to using an AIPW estimator. Furthermore, AIPW-xKTE shows a drastic improvement in computational costs with respect to KTE, as exhibited in Table \ref{table:times}. Lastly, we see that AIPW-xKTE clearly outperforms IPW-xKTE, which illustrates the significant benefits of the doubly robust approach.

\begin{table}[h!]
\begin{center}

\caption{Average times (in milliseconds) for different tests and sample sizes. AIPW-xKTE is almost 20 times faster than KTE for $n=350$ with only 100 permutations used.}

\vspace{.3cm}

\begin{tabular}{cccc}
\toprule
{Test}&& $n$ \\
 \cmidrule{2-4}
{} &    150 &    250 &    350 \\
\midrule
AIPW-xKTE  &  1.770 &  2.994 &  4.940 \\
IPW-xKTE &  1.496 &  2.297 &  2.331 \\
KTE         & 16.903 & 44.483 & 89.875 \\
\bottomrule
\end{tabular}

\label{table:times}
\end{center}
\end{table}

\begin{figure*}[!h] 
  \includegraphics[width=\textwidth]{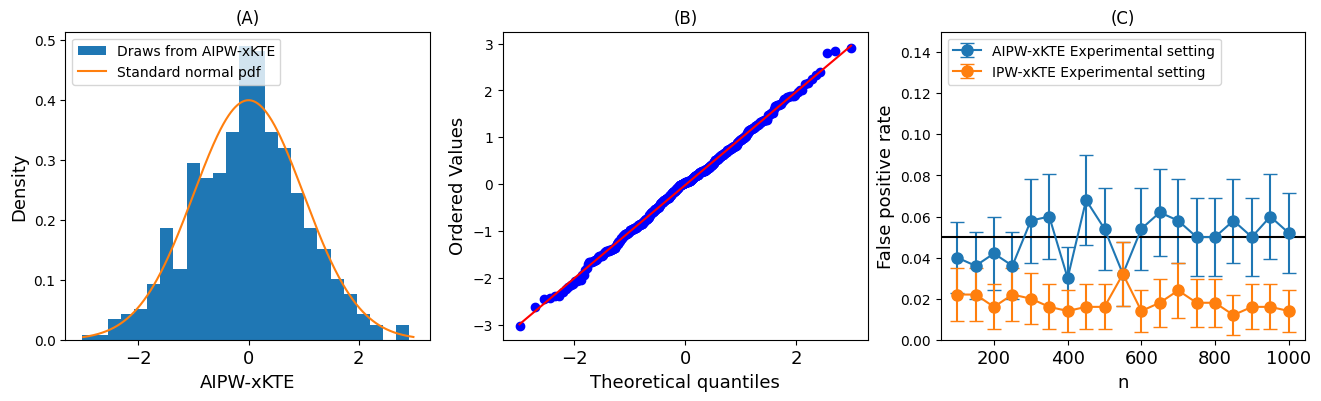}
  \caption{Illustration of 500 simulations of the AIPW-xKTE under the null in the experimental setting: (A) Histogram of AIPW-xKTE alongside the pdf of a standard normal for $n=500$, (B) Normal Q-Q plot of AIPW-xKTE for $n=500$, (C) Empirical size of AIPW-xKTE and IPW-xKTE against different sample sizes. The figures show the Gaussian behaviour of the statistic under the null, which leads to a well calibrated test.}
  \label{fig:null_dr_experimental}
\end{figure*}

\begin{figure*}[!h] 
  \includegraphics[width=\textwidth]{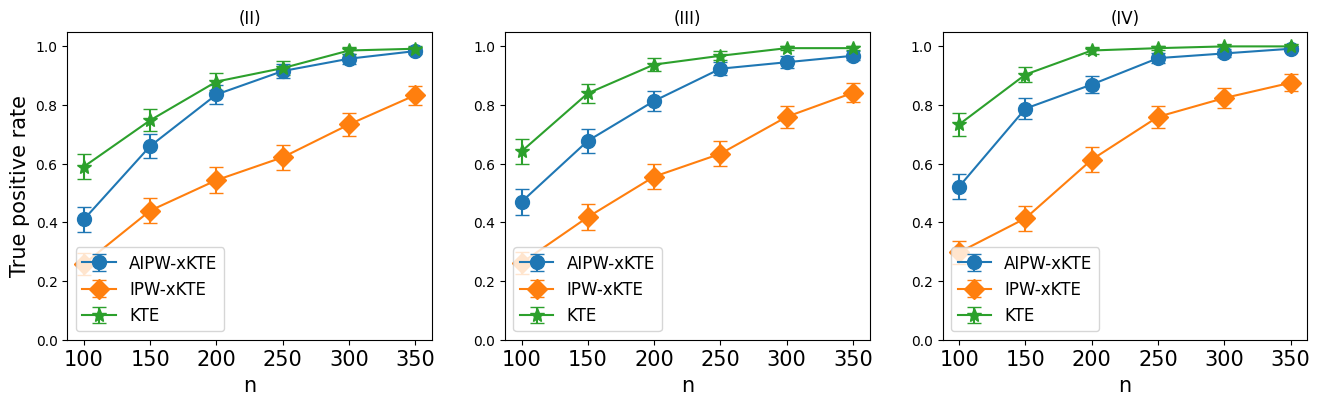}
  \caption{True positive rates of 500 simulations of the tests in Scenarios II, III, and IV in the experimental setting. AIPW-xKTE demonstrates comparable power to KTE, and improves appreciably over IPW-xKTE.}
  \label{fig:power_234_experimental}
\end{figure*}

\clearpage
\section{Experiments} \label{appendix:experiments}

We present in this appendix a comprehensive display of the simulations conducted. Subsection \ref{subsection:description_experiments} includes an exhaustive description of the experiments introduced in Section \ref{section:experiments} and Appendix \ref{appendix:comparison}. Additional simulations and results may be found in Subsection \ref{subsection:further_experiments}.

\subsection{Exhaustive description of experiments} \label{subsection:description_experiments}
We assume that we observe $(x_i, a_i, y_i)_{i = 1}^{n} \sim (X, A, Y)$ and that (causal inference assumptions) consistency, no unmeasured confounding, and overlap hold. All the tests are considered at a 0.05 level.  Both synthetic data and real data are evaluated.

\textbf{Synthetic data.} All the data (covariates, treatments and responses) is artificially generated. We define
\begin{equation}
    Y_0^* = \beta^T X + \epsilon_0, Y_1^* = \beta^TX + b + \epsilon_1,
\end{equation}
such that $\epsilon_0, \epsilon_1 \sim N(0, 0.5)$ are independent noises, $X \sim N(0, I_5)$ and $\beta = [0.1, 0.2, 0.3, 0.4, 0.5]^T$. We set b = 0 and b = 2 for Scenario I and Scenario II respectively. For Scenario III, we set $b = 2Z - 1$, where $Z$ is an independent Bernoulli random variable $Z \sim \text{Bernoulli}(0.5)$. In Scenario IV, $b \sim \text{Uniform}(-2, 2)$. 

In the experimental setting, we consider a usual experimental design where $\frac{n}{2}$ units are treated and $\frac{n}{2}$ units are not treated such that $\pi(X) = \frac{1}{2}$ almost surely. In the observational setting, we define $\pi(X) = s(\alpha^T X + \alpha_0)$, where $s(z) = 1/(1+exp(-z))$ (sigmoid function), $\alpha_0 = 0.05$ and $\alpha = [0.05,0.04,0.03,0.02,0.01]^T$. In such setting, we estimate $\hat{\pi}$ using an L2 regularized logistic regression with the regularization term set to 1e-6. For AIPW-xKTE, the mean embedding regressions are fitted as conditional mean embeddings.

\begin{figure*}[!b] 
  \includegraphics[width=\textwidth]{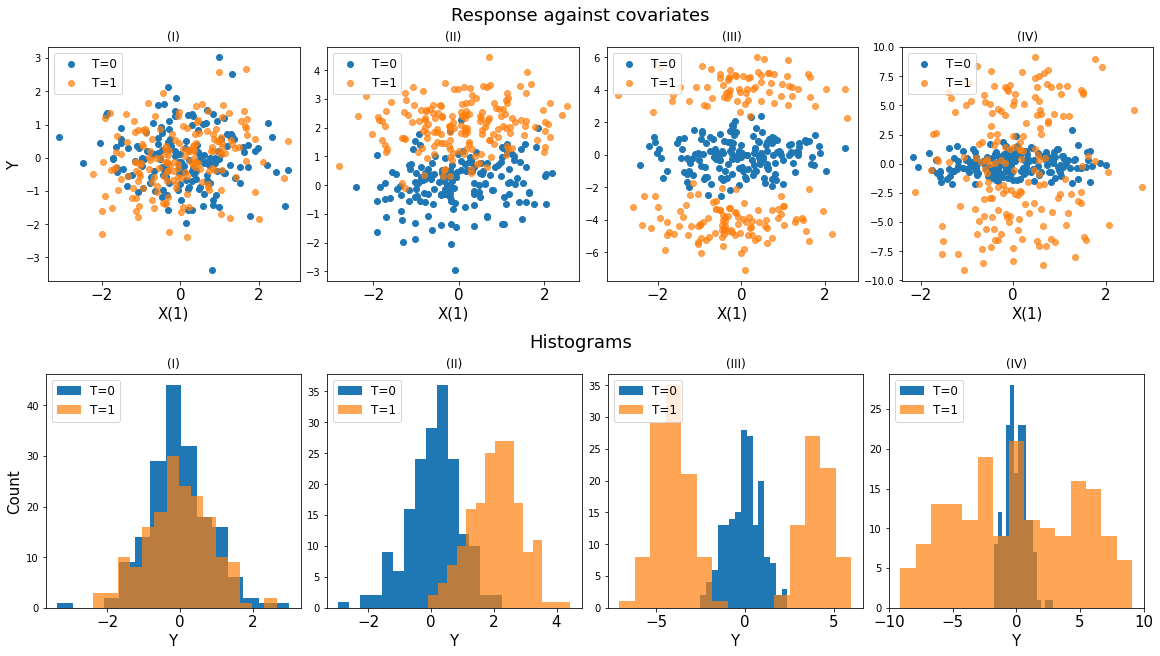}
  \caption{Illustration of one simulation of the simulated data in the experimental setting with $n = 350$ for Scenarios I, II, III, and IV. We display scatter plots of the outcome against the first covariate and histograms of the outcomes, both grouped by treatment.}
  \label{fig:covariates_simulated_data}
\end{figure*}

In the experimental setting, we consider the tests IPW-xKTE, AIPW-xKTE, and KTE \citep{muandet2021counterfactual}. In the observational setting, we consider IPW-xKTE, AIPW-xKTE and permutation-based tests based on Causal Forests \citep{wager2018estimation}, Bayesian Additive Regression Trees (BART) \citep{hahn2020bayesian}, and a linear regression based AIPW estimator (Baseline-AIPW). For the latter three, we recalculate the respective statistics for every permutation and reject the null if the original statistic is above the 0.95 empirical quantile. For all scenarios and settings, we consider 500 simulations for each $n \in \{ 100, 150, 200, 250, 300, 350 \}$. We exhibit an illustration of the synthetic data from one the simulations carried in Figure \ref{fig:covariates_simulated_data}.

\textbf{Real data.} We use data obtained from the Infant Health and Development Program (IHDP) and compiled by \citet{hill2011bayesian}, in which the covariates come from a randomized
experiment studying the effects of specialist home visits on
cognitive test scores. The propensity scores are unknown and hence they must be estimated. We work with the following 18 variables of the covariate set: [`bw',`momage',`nnhealth',`birth.o',`parity',`moreprem',`cigs',`alcohol',`ppvt.imp', `bwg',`female',`mlt.birt',`b.marry',`livwho',`language',`whenpren',`drugs',`othstudy'], where the first nine are continuous and the last nine are discrete; we refer to \citet{hill2011bayesian} for a detailed presentation of the data set. We eliminate the rows in which there is a NaN for any of the variables considered, finishing with 908 observations, out of which 347 were treated. Further, we standarize the data such that every continuous variable has mean 0 and variance 1. We illustrate the data obtained after preprocessing in Figure~\ref{fig:covariates_ihdp}. 

\begin{figure*}[!b] 
  \includegraphics[width=\textwidth]{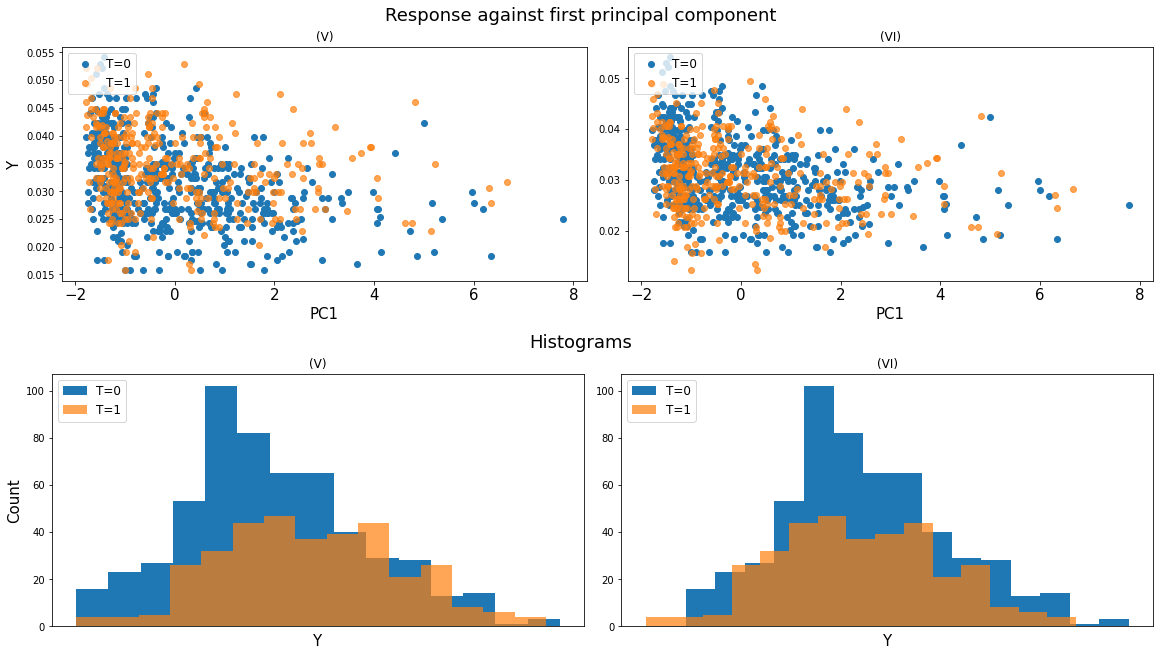}
  \caption{Illustration of the IHDP data for Scenarios V and VI. We display scatter plots of the outcome against the first principal component and histograms of the outcomes, both grouped by treatment.}
  \label{fig:covariates_ihdp}
\end{figure*}

We consider five scenarios with the IHDP data. For Scenarios I, II, III and IV, we generate the response variables similarly to the previous experiments:
\begin{equation}
    Y_0^* = \beta^T X + \epsilon_0, Y_1^* = \beta^TX + b + \epsilon_1,
\end{equation}
where $\epsilon_0, \epsilon_1 \sim N(0, 0.5)$ are independent noises, and $\beta = [1, \ldots, 1]^T$. We set b = 0 and b = 1 for Scenario I and Scenario II respectively. For scenario III, we set $b = 2(2Z - 1)$, where $Z$ is an independent Bernoulli random variable $Z \sim \text{Bernoulli}(0.5)$. In scenario IV, $b \sim \text{Uniform}(-4, 4)$. In Scenario V, we take the IQ test (Stanford Binet) score, variable `iqsb.36' of the data set, measured at the end of the intervention (age 3) as our response variable. For Scenario VI, we calculate the average treatment effect in Scenario V using Causal Forests (obtaining $0.0035$ i.e. a positive shift) and subtract it to the IQ test score of those who are treated, thus obtaining a distribution of the IQ test scores with zero average treatment effect. We illustrate the data of Scenarios V and VI in Figure \ref{fig:covariates_ihdp}. For each of the scenarios, we consider the tests AIPW-xKTE, IPW-xKTE, Causal Forests, BART and Baseline-AIPW on 500 bootstrapped subsets to estimate the rejection rates. For AIPW-xKTE and IPW-xKTE, we estimate $\hat{\pi}$ using a regularized logistic regression with the regularization term set to 1e-6. For AIPW-xKTE, the mean embedding regressions are fitted as conditional mean embeddings.  

We sample split for estimating the conditional mean embeddings, but for the propensity scores, we use the whole training data. While we stated Theorem \ref{theorem:dr_gaussian_kte} imposing condition (iv) \textit{or} condition (v) on $\hat \phi^{(r)}$ (common practice in causal inference for ease of presentation, where one usually imposes sample splitting \textit{or} a Donsker condition), Theorem \ref{theorem:dr_gaussian_kte} holds as long as sample splitting is conducted for those estimators $\hat \pi ^{(r)}, \hat \beta_a^{(r)}$ which may overfit (analogous case in the standard doubly robust estimator). In our case, the propensity is modeled by L2 regularized logistic regression, which is simple and cannot overfit when the dimension is fixed and $n \to \infty$.

All the kernels considered on $\mathcal{Y}$ for AIPW-xKTE, IPW-xKTE, and KTE are RBF i.e. $k_y(y_1, y_2) = exp(\nu_y | y_1 - y_2 |^2)$, with scale parameter $\nu_y$ chosen by the median heuristic. For AIPW-xKTE, the mean embedding regressions are fitted as conditional mean embeddings, and we conduct sample splitting in order to train them. The kernel considered for such conditional mean embeddings on $\mathcal{X}$ is also RBF $k_x(x_1, x_2) = exp(\nu_x | x_1 - x_2 |^2)$ with scale parameter $\nu_x$ chosen by the median heuristic as well. The regularization parameter $\lambda$ was taken to be equal to $\nu_x$. 

\subsection{Further experiments} \label{subsection:further_experiments}

In this subsection, we investigate the performance of the tests in non-linear settings. We extend the simulations with synthetic data presented in Subsection \ref{subsection:description_experiments}, defining the potential outcomes
\begin{equation}
    Y_0^* = \cos(\beta^T X) + \epsilon_0, \quad Y_1^* = \cos(\beta^TX) + b + \epsilon_1,
\end{equation}
where $\beta$, $b$, $\epsilon_0$ and $\epsilon_1$ are defined as described in Subsection \ref{subsection:description_experiments}. The different scenarios and the remaining parameters are also defined as in Subsection \ref{subsection:description_experiments}.

Figure \ref{fig:null_case2} displays the behaviour of AIPW-xKTE in Scenario I (under the null). Analogously to the linear setting, we note that AIPW-xKTE presents a Gaussian behaviour under the null, which leads to a well-calibrated test. The performance of the tests in Scenarios II, III and IV is displayed in Figure \ref{fig:scenarios_case2}. In this non-linear case, AIPW-xKTE proves even more competitive in Scenario II, while retaining power in Scenarios III and IV. In short, the test proves valid (as expected) in the non-linear case, and the comments from Section \ref{section:experiments} equally apply.

\begin{figure*}[] 
  \includegraphics[width=\textwidth]{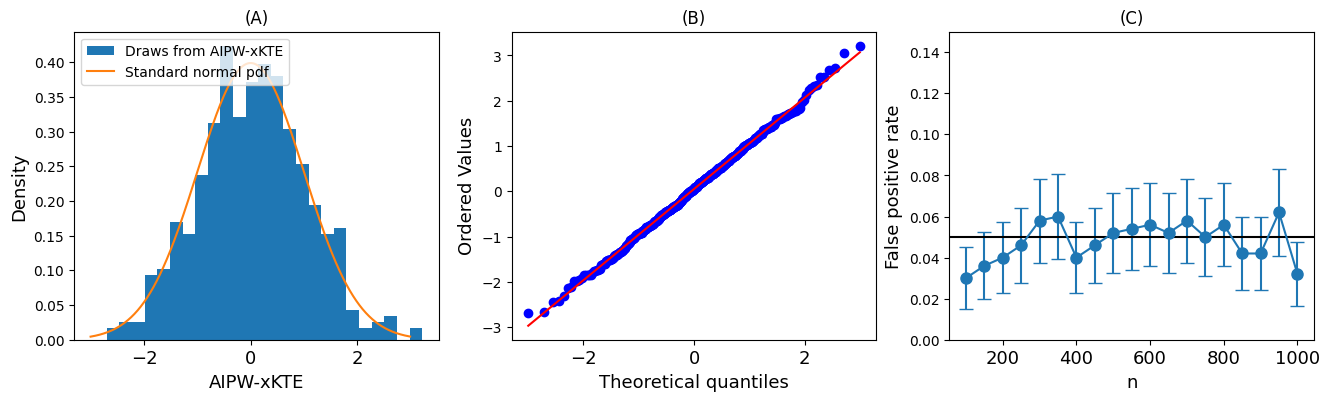}
  \caption{Illustration of 500 simulations of the AIPW-xKTE with non-linear effects under the null: (A) Histogram of AIPW-xKTE alongside the pdf of a standard normal for $n=500$, (B) Normal Q-Q plot of AIPW-xKTE for $n=500$, (C) Empirical size of AIPW-xKTE against different sample sizes. The figures show the same Gaussian behaviour of the statistic under the null as in the linear setting.}
  \label{fig:null_case2}
\end{figure*}

\begin{figure*}[] 
  \includegraphics[width=\textwidth]{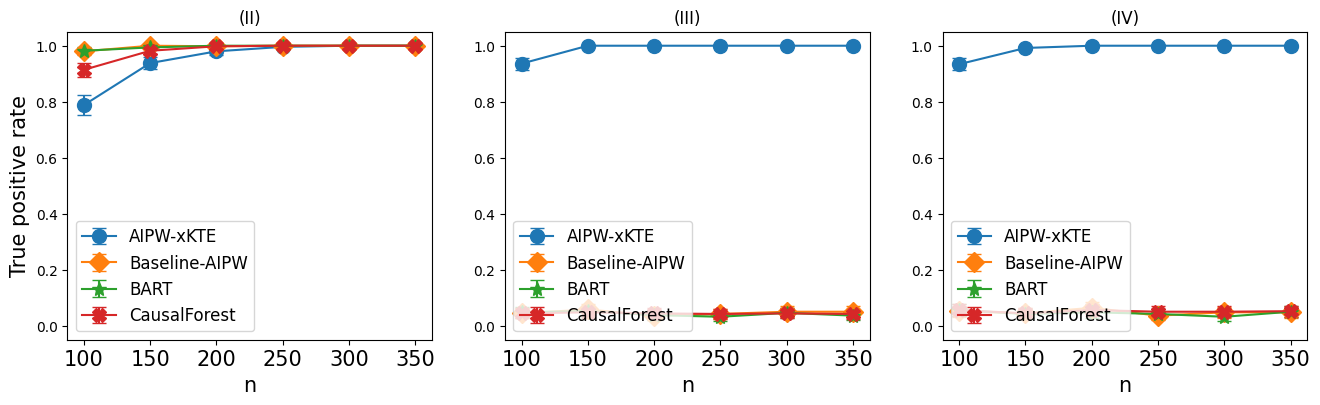}
  \caption{True positive rates of 500 simulations of the tests in Scenarios II, III, and IV with non-linear effects. AIPW-xKTE shows notable true positive rates in every scenario, as in the linear setting.}
  \label{fig:scenarios_case2}
\end{figure*}

\clearpage
\section{Proofs} \label{appendix:proofs}

We present in this appendix the proof of the main result of the paper, Theorem~\ref{theorem:dr_gaussian_kte}. For this purpose, we introduce the notation used in Subsection~\ref{subsection:notation} followed by the exposition of previously known results in Subsection~\ref{subsection:original_theorems}. We present a sequence of extensions of such results to the vector-valued scenario in Subsection~\ref{subsection:new_theorems}, which will be used to prove the main theorem of the paper in Subsection~\ref{subsection:main_result}.

\subsection{Notation} \label{subsection:notation}

We use standard big-oh and little-oh notation, i.e., $X_n = O_\Pb(r_n)$ means $X_n/r_n$ is bounded in probability and $X_n = o_\Pb(r_n)$ means $X_n/r_n \inprob 0$. We use $\Pn$ for sample averages i.e. $\Pn(f) = \Pn\{f(Z)\} = \frac{1}{n} \sum_i f(Z_i)$. For a potentially random function $\widehat{f}$, we denote $\Pb(\widehat{f})=\Pb\{\widehat{f}(Z)\} = \int \widehat{f}(z) \ d\Pb(z)$, and $\| \widehat{f} \|^2 = \int \widehat{f}(z)^2 \ d\Pb(z)$ represents the squared $L_2(\Pb)$ norm.   

Given a Hilbert space $\mathcal{H}$ and a $\mathcal{H}$-valued $\omega(z) \in \mathcal{H}$, we denote its norm in the Hilbert space by $\|\omega(z) \|_\mathcal{H}$. We let $\Pn(\omega) = \Pn\{\omega(Z)\} = \frac{1}{n} \sum_i \omega(Z_i)$ and $\Pb(\widehat{\omega})=\Pb\{\widehat{\omega}(Z)\} = \int \widehat{\omega}(z) \ d\Pb(z)$ (Bochner integral). Furthermore, $\| \widehat{\omega} \|^2 = \int \| \widehat{\omega}(z)\|_\mathcal{H}^2 \ d\Pb(z)$ denotes the squared $L_2(\Pb)$ norm of the $\mathcal{H}$-valued function norm, and $\| \cdot \|_\text{HS}$ the Hilbert-Schmidt norm of an operator.

\subsection{Previously known results} \label{subsection:original_theorems}

We start by introducing a generalization of the formula that relates the mean, variance and second moment of a real-valued random variable to $\mathcal{H}$-valued random variables.

\begin{lemma} [{\citet[Theorem 7.2.2]{hsing2015theoretical}}]\label{lemma:norm_hilbert}
Let $\omega \in \mathcal{H}$ be a random variable such that $\mathbb{E} \| \omega \|_\mathcal{H}^2 < \infty$. Then
\begin{align*}
    \mathbb{E} \| \omega - \mathbb{E}[\omega] \|_\mathcal{H}^2 = \mathbb{E} \| \omega \|_\mathcal{H}^2 -  \|\mathbb{E}[\omega] \|_\mathcal{H}^2.
\end{align*}
\end{lemma}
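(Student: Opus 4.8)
The plan is to reproduce, in the Hilbert-space setting, the elementary scalar identity $\var(X) = \E[X^2] - (\E X)^2$, with squares replaced by inner products and ordinary expectations replaced by Bochner integrals. First I would record that the hypothesis $\E \| \omega \|_\cH^2 < \infty$ forces $\E \| \omega \|_\cH < \infty$ (by Jensen's inequality, or Cauchy--Schwarz against the constant function $1$), so that the Bochner integral $m := \E[\omega]$ exists as a genuine element of $\cH$ and satisfies $\| m \|_\cH \le \E \| \omega \|_\cH < \infty$.

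Next I would expand the integrand pointwise. For every realization of $\omega$,
\[
\| \omega - m \|_\cH^2 = \langle \omega - m,\, \omega - m \rangle_\cH = \| \omega \|_\cH^2 - 2 \langle \omega, m \rangle_\cH + \| m \|_\cH^2 .
\]
Each of the three terms on the right-hand side is integrable: $\| \omega \|_\cH^2$ by assumption, $\| m \|_\cH^2$ because it is a constant, and $\langle \omega, m \rangle_\cH$ because $|\langle \omega, m \rangle_\cH| \le \| \omega \|_\cH \, \| m \|_\cH$ together with $\E \| \omega \|_\cH < \infty$. Hence I may take expectations term by term by linearity of the integral.

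The only nonroutine ingredient is the interchange of expectation and inner product: for the fixed element $m \in \cH$ the map $u \mapsto \langle u, m \rangle_\cH$ is a bounded linear functional, and bounded linear maps commute with the Bochner integral, so $\E \langle \omega, m \rangle_\cH = \langle \E[\omega], m \rangle_\cH = \langle m, m \rangle_\cH = \| m \|_\cH^2$. Substituting this back yields
\[
\E \| \omega - m \|_\cH^2 = \E \| \omega \|_\cH^2 - 2 \| m \|_\cH^2 + \| m \|_\cH^2 = \E \| \omega \|_\cH^2 - \| m \|_\cH^2 ,
\]
which is exactly the claimed identity. There is essentially no obstacle here beyond carefully verifying the integrability conditions above and invoking the standard commutation property of the Bochner integral with bounded linear functionals; since the statement is quoted from \citet[Theorem~7.2.2]{hsing2015theoretical}, this short self-contained argument is all that is needed.
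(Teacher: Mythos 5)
Your proof is correct: the expansion $\|\omega - m\|_\mathcal{H}^2 = \|\omega\|_\mathcal{H}^2 - 2\langle \omega, m\rangle_\mathcal{H} + \|m\|_\mathcal{H}^2$, the integrability checks, and the commutation of the Bochner integral with the bounded linear functional $u \mapsto \langle u, m\rangle_\mathcal{H}$ are exactly the standard argument. The paper itself gives no proof of this lemma --- it is imported verbatim from \citet[Theorem~7.2.2]{hsing2015theoretical} --- so your short self-contained derivation is all that is needed and matches the textbook approach.
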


Furthermore, one may retrieve the expectation of the product of projections of the $\mathcal{H}$-valued random variables through the covariance operator, as exhibited in the following lemma. 
\begin{lemma} \label{lemma:covariance_expectations}
    Let $\omega \in \mathcal{H}$ be a random variable such that $\mathbb{E}[\omega] = 0$ and $\mathbb{E} \| \omega \|_\mathcal{H}^2 < \infty$. Then
    \begin{align*}
        \E [\langle \omega, f \rangle \langle \omega, g \rangle] = \langle Cf, g \rangle, \quad f, g \in \mathcal{H},
    \end{align*}
    where $C = \E [\omega \otimes \omega]$ is the covariance operator.
\end{lemma}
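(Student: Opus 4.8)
\textbf{Proof proposal for Lemma~\ref{lemma:covariance_expectations}.}
The plan is to reduce the claimed identity to the definition of the tensor product $\omega \otimes \omega$ as a rank-one operator and the definition of the covariance operator $C = \E[\omega\otimes\omega]$ as a Bochner integral of Hilbert-Schmidt operators. First I would recall that for fixed $u,v \in \mathcal{H}$ the operator $u \otimes v$ acts on $g \in \mathcal{H}$ by $(u\otimes v)g = \langle v, g\rangle\, u$, so in particular $(\omega\otimes\omega)g = \langle \omega, g\rangle\,\omega$ pointwise (for each realization of $\omega$). Pairing with $f$ gives the scalar random variable $\langle (\omega\otimes\omega)g, f\rangle = \langle \omega, f\rangle\langle \omega, g\rangle$. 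The assumption $\E\|\omega\|_\mathcal{H}^2 < \infty$ guarantees, via Cauchy--Schwarz, that $\E|\langle\omega,f\rangle\langle\omega,g\rangle| \le \|f\|_\mathcal{H}\|g\|_\mathcal{H}\,\E\|\omega\|_\mathcal{H}^2 < \infty$, so the expectation on the left-hand side is well-defined, and it also guarantees $\E\|\omega\otimes\omega\|_{\text{HS}} = \E\|\omega\|_\mathcal{H}^2 < \infty$, so $C$ exists as a Bochner integral in the space of Hilbert-Schmidt operators.

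The key step is then to interchange expectation with the (bounded, linear) functional $A \mapsto \langle Ag, f\rangle$ on the space of Hilbert-Schmidt operators: since this functional is continuous (indeed $|\langle Ag,f\rangle| \le \|A\|_{\text{op}}\|f\|_\mathcal{H}\|g\|_\mathcal{H} \le \|A\|_{\text{HS}}\|f\|_\mathcal{H}\|g\|_\mathcal{H}$), the defining property of the Bochner integral yields
\begin{align*}
    \langle Cf, g\rangle = \langle C g, f \rangle = \big\langle \E[\omega\otimes\omega]\,g,\, f\big\rangle = \E\big[\langle (\omega\otimes\omega)g, f\rangle\big] = \E\big[\langle\omega,f\rangle\langle\omega,g\rangle\big],
\end{align*}
where the first equality uses that $C$ is self-adjoint (being an average of self-adjoint rank-one operators $\omega\otimes\omega$), which I would note explicitly so that the statement matches the symmetric form $\langle Cf,g\rangle$ in the lemma. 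Note that the hypothesis $\E[\omega]=0$ is not actually needed for this identity per se, but it is what makes $C$ the genuine covariance operator rather than merely the second-moment operator, so I would keep it as stated.

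The main obstacle, such as it is, is purely a matter of being careful about which space the Bochner integral lives in and justifying the exchange of expectation and the evaluation functional; there is no deep difficulty, only the bookkeeping of verifying integrability ($\E\|\omega\otimes\omega\|_{\text{HS}} = \E\|\omega\|_\mathcal{H}^2<\infty$) and continuity of the functional, both of which follow immediately from $\E\|\omega\|_\mathcal{H}^2<\infty$. One could alternatively avoid operator-valued Bochner integrals entirely by defining the bilinear form $(f,g)\mapsto \E[\langle\omega,f\rangle\langle\omega,g\rangle]$ directly, checking it is bounded, and invoking the Riesz representation for bounded bilinear forms to produce the operator $C$; I would mention this as the cleaner route if the paper has not already committed to the $\E[\omega\otimes\omega]$ definition, but since the lemma statement already writes $C = \E[\omega\otimes\omega]$, the Bochner-integral argument above is the natural one.
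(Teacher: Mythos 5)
Your argument is correct and complete. The paper itself states this lemma without proof, listing it under ``previously known results,'' so there is nothing to compare against; your derivation --- realize $(\omega\otimes\omega)g = \langle \omega, g\rangle\,\omega$ pointwise, check $\E\|\omega\otimes\omega\|_{\text{HS}} = \E\|\omega\|_{\mathcal{H}}^2 < \infty$ so that $C$ exists as a Bochner integral, and pull the bounded linear functional $A \mapsto \langle Ag, f\rangle$ through the integral --- is exactly the standard justification, and your observation that $\E[\omega]=0$ is not needed for the identity itself is also right. One small simplification: the self-adjointness step is superfluous, since with your convention you can pair $(\omega\otimes\omega)f = \langle\omega,f\rangle\,\omega$ directly with $g$ to get $\langle (\omega\otimes\omega)f, g\rangle = \langle\omega,f\rangle\langle\omega,g\rangle$ and hence $\langle Cf,g\rangle = \E[\langle\omega,f\rangle\langle\omega,g\rangle]$ in one line.
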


Next, we present the generalization of the central limit theorem to $\mathcal{H}$-valued random variables. 

\begin{theorem}[{\citet[Theorem 2.7]{bosq2000linear}}] \label{theorem:clt_hilbert}
Let $(\omega_i, i \geq 1)$ be a sequence of iid random variables in a separable Hilbert space $\mathcal{H}$. If $m = \E[\omega_1]$ and $\E[\| \omega_1 \|^2] < \infty$, then
\begin{align*}
    \frac{1}{\sqrt{n}} \sum_{i = 1}^n (\omega_i - m)\stackrel{d}{\to} \xi,
\end{align*}
where $\xi$ is a Gaussian random element with mean zero and covariance operator $C=\E[\omega_1 \otimes \omega_1]$ i.e. $$\xi \sim N(0, \E [ \omega_1 \otimes \omega_1]).$$
\end{theorem}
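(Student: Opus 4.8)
The plan is to prove weak convergence in the separable Hilbert space $\mathcal{H}$ by the classical route of combining convergence of all finite-dimensional projections with tightness, and then invoking Prohorov's theorem. After centering (assume $m = 0$ without loss of generality) write $S_n = \frac{1}{\sqrt{n}}\sum_{i = 1}^n \omega_i$; the goal is $S_n \indist \xi$. Fix a complete orthonormal basis $(e_k)_{k \geq 1}$ of $\mathcal{H}$, which exists by separability.

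First I would establish finite-dimensional convergence. For each fixed $d$, the map $\omega \mapsto (\langle \omega, e_1 \rangle, \ldots, \langle \omega, e_d \rangle)$ sends each $\omega_i$ to an iid mean-zero random vector in $\R^d$ with finite covariance, since $\E[\langle \omega_1, e_j \rangle^2] \leq \E[\| \omega_1 \|_\mathcal{H}^2] < \infty$. The ordinary multivariate CLT then gives $\frac{1}{\sqrt{n}} \sum_i (\langle \omega_i, e_j \rangle)_{j \leq d} \indist N_d(0, \Sigma_d)$, and Lemma~\ref{lemma:covariance_expectations} identifies the entries $(\Sigma_d)_{jk} = \langle C e_j, e_k \rangle$. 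These are precisely the $d$-dimensional projections of $\xi \sim N(0, C)$.

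The main work is tightness of $\{S_n\}$, which I would verify through the standard Hilbert-space criterion: uniform boundedness in probability together with uniform smallness of the basis tail. Both follow from a single second-moment computation. Because the $\omega_i$ are iid and centered, the cross terms $\E[\langle \omega_i, \omega_j \rangle] = \langle \E[\omega_i], \E[\omega_j] \rangle = 0$ for $i \neq j$ vanish, so $\E[\| S_n \|_\mathcal{H}^2] = \E[\| \omega_1 \|_\mathcal{H}^2]$ for every $n$ (the Hilbert-space variance identity, cf.\ Lemma~\ref{lemma:norm_hilbert}); Markov's inequality then gives the uniform bound $\Pb(\| S_n \|_\mathcal{H} > M) \leq \E[\| \omega_1 \|_\mathcal{H}^2] / M^2$. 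For the tail, the same cancellation yields $\E[\sum_{k > d} \langle S_n, e_k \rangle^2] = \sum_{k > d} \E[\langle \omega_1, e_k \rangle^2]$, independent of $n$; since $\sum_k \E[\langle \omega_1, e_k \rangle^2] = \E[\| \omega_1 \|_\mathcal{H}^2] < \infty$ by Parseval, this tail tends to $0$ as $d \to \infty$ uniformly in $n$, and Markov again controls the probability. This establishes tightness.

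Finally I would conclude via Prohorov: tightness makes $\{S_n\}$ relatively compact in distribution, and by the first step every subsequential limit has the same finite-dimensional projections as $\xi$; since finite-dimensional projections along a complete orthonormal basis determine a Borel measure on a separable Hilbert space uniquely, every subsequential limit equals the law of $\xi$, whence $S_n \indist \xi$. The limit is Gaussian with mean zero and covariance operator $C = \E[\omega_1 \otimes \omega_1]$ because each of its finite-dimensional projections is the Gaussian limit obtained above. I expect the tightness step to be the crux: the finite-dimensional convergence is immediate from the classical CLT, but tightness is where separability and the $L^2$ assumption are genuinely used, and one must know that the relevant criterion in Hilbert space reduces to controlling the basis tail of the second moment.
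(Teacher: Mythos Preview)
Your proof is correct and follows the standard route to the Hilbert-space CLT: finite-dimensional marginals via the classical multivariate CLT, tightness via the Parseval tail criterion, and identification of the limit through Prohorov plus uniqueness of finite-dimensional distributions. The second-moment computations you give for tightness are exactly right, and you correctly flag separability and the $L^2$ moment as the ingredients that make the tail control uniform in $n$.

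There is nothing to compare against, however: the paper does not prove this theorem. It is quoted in Subsection~\ref{subsection:original_theorems} as a previously known result from \citet[Theorem~2.7]{bosq2000linear} and is used as a black box in the proof of Theorem~\ref{theorem:dr_gaussian_kte} (to conclude $\tau_1, \tau_2 \indist N(0,C)$ and hence $\|\tau_r\|_\mathcal{H} = O_\Pb(1)$). So your proposal is not an alternative to the paper's argument but rather a self-contained justification of a result the paper takes for granted.
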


Moreover, Gaussian $\mathcal{H}$-valued random variables admit the following expansion in the orthonormal system defined by its covariance operator.

\begin{lemma} [{\citet[Equation 2.8]{horvath2012inference}}] \label{lemma:karhunen-loeve}
    A normally distributed function $\xi$ in a separable Hilbert space with a covariance operator C admits the expansion 
    \begin{align*}
        \xi \stackrel{d}{=} \sum_i^\infty \sqrt{\lambda_j} N_j v_j, 
    \end{align*}
where $N_j$ are independent standard normal distributions, $(\lambda_j, j \geq 1)$ is the sequence of eigenvalues of the covariance operator $C$, and $(v_j, j \geq 1)$ is a sequence of orthonormal eigenfunctions of the covariance operator $C$.
\end{lemma}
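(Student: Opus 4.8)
The plan is to diagonalize the covariance operator $C$ via the spectral theorem and then show that the coordinates of $\xi$ in the resulting eigenbasis are independent centered Gaussians with the prescribed variances. First I would recall that, since $\xi$ is a Gaussian element of the separable Hilbert space $\mathcal{H}$ with $\E[\|\xi\|_\mathcal{H}^2] < \infty$, its covariance operator $C = \E[\xi \otimes \xi]$ is self-adjoint, positive semi-definite, and trace-class (in particular compact), with $\operatorname{tr}(C) = \E[\|\xi\|_\mathcal{H}^2] < \infty$. The spectral theorem for compact self-adjoint operators then supplies an orthonormal system of eigenfunctions $(v_j)_{j \geq 1}$ with $C v_j = \lambda_j v_j$, $\lambda_j \geq 0$, and $\sum_j \lambda_j = \operatorname{tr}(C) < \infty$; after completing $(v_j)$ to an orthonormal basis of $\mathcal{H}$ if necessary (the added vectors lie in $\ker C$ and thus carry eigenvalue zero), I would expand $\xi = \sum_j \xi_j v_j$ with coordinates $\xi_j := \langle \xi, v_j \rangle$.

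Next I would identify the joint law of the coordinates. Because $\xi$ is Gaussian, every finite collection $(\xi_{j_1}, \ldots, \xi_{j_m}) = (\langle \xi, v_{j_1} \rangle, \ldots, \langle \xi, v_{j_m} \rangle)$ is a jointly Gaussian random vector, being a finite tuple of continuous linear functionals of $\xi$. Their means vanish, $\E[\xi_j] = \langle \E[\xi], v_j \rangle = 0$, and applying Lemma~\ref{lemma:covariance_expectations} with $f = v_j$ and $g = v_k$ yields
\begin{align*}
    \E[\xi_j \xi_k] = \langle C v_j, v_k \rangle = \lambda_j \langle v_j, v_k \rangle = \lambda_j \delta_{jk}.
\end{align*}
Hence the $\xi_j$ are pairwise uncorrelated and, by joint Gaussianity, mutually independent, with $\xi_j \sim N(0, \lambda_j)$. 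Setting $N_j := \xi_j / \sqrt{\lambda_j}$ for each index with $\lambda_j > 0$ produces independent standard normals and gives $\xi_j = \sqrt{\lambda_j} N_j$, while coordinates with $\lambda_j = 0$ satisfy $\xi_j = 0$ almost surely and contribute nothing to the sum.

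Finally I would verify that the series $\sum_j \sqrt{\lambda_j} N_j v_j$ converges and matches the law of $\xi$. Convergence in $L^2(\Omega; \mathcal{H})$ follows from orthonormality of $(v_j)$ together with $\E\big[\big\| \sum_{j > m} \sqrt{\lambda_j} N_j v_j \big\|_\mathcal{H}^2\big] = \sum_{j > m} \lambda_j \to 0$ as $m \to \infty$, using the summability $\sum_j \lambda_j < \infty$ established above. The partial sums are Gaussian, so their limit is Gaussian, and it shares both the mean (zero) and the covariance operator ($C$) of $\xi$; since the law of a Gaussian element is determined by these two quantities, I conclude $\xi \stackrel{d}{=} \sum_j \sqrt{\lambda_j} N_j v_j$.

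The main obstacle is the structural input in the first step: one must establish that the covariance operator of a square-integrable Gaussian element is trace-class, so that the spectral theorem applies and $\sum_j \lambda_j < \infty$ (this is what makes both the eigenexpansion and the $L^2$ convergence legitimate). Once compactness and summability are secured, the coordinate computation through Lemma~\ref{lemma:covariance_expectations} and the passage from uncorrelatedness to independence—which relies on the joint Gaussianity inherent in the definition of a Gaussian element—are routine.
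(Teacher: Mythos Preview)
Your argument is correct and follows the standard route to the Karhunen--Lo\`eve expansion: spectral decomposition of the (trace-class, hence compact) covariance operator, identification of the coordinates $\langle \xi, v_j\rangle$ as independent $N(0,\lambda_j)$ via joint Gaussianity and Lemma~\ref{lemma:covariance_expectations}, and $L^2(\Omega;\mathcal{H})$ convergence from $\sum_j \lambda_j < \infty$. Note, however, that the paper does not supply its own proof of this lemma---it is quoted verbatim from \citet[Equation~2.8]{horvath2012inference} and used as a black box in Step~3 of the proof of Theorem~\ref{theorem:dr_gaussian_kte}---so there is no in-paper argument to compare against. Your write-up would serve as a self-contained proof where the paper simply cites the result; the only implicit assumption you invoke that the lemma statement leaves tacit is $\E[\xi]=0$, which is indeed forced by the form of the claimed expansion.
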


We now present a result concerning the asymptotic behaviour of an empirical process given that the considered random functions belong to a Donsker class. For a full discussion on Donsker classes and empirical processes, we refer the reader to Chapter 18 and Chapter 19 of \citet{van2000asymptotic}.

\begin{lemma} [{\citet[Lemma 19.24]{van2000asymptotic}}] \label{lemma:van2000donsker}
Suppose that $\mathcal{F}$ is a $\Pb$-Donsker class of measurable (real-valued) functions and $\hat{f}_n$ is a sequence of random functions that take their values in $\mathcal{F}$ such that $\| \hat{f}_n - f \|^2 \stackrel{p}{\to} 0$ for some $f \in L_2(\Pb)$. Then 
$$ (\Pn-\Pb) (\widehat{f}_n-f) = o_\Pb\left( \frac{1}{\sqrt{n}}  \right) . $$ 
    
\end{lemma}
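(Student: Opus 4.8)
The plan is to recast the claim in terms of the empirical process $\Gn = \sqrt{n}(\Pn - \Pb)$. Since $(\Pn - \Pb)(\widehat f_n - f) = n^{-1/2}\,\Gn(\widehat f_n - f)$, the assertion $(\Pn-\Pb)(\widehat f_n - f) = o_\Pb(n^{-1/2})$ is exactly equivalent to $\Gn(\widehat f_n - f) \inprob 0$. The whole argument then rests on the asymptotic equicontinuity of $\Gn$ that the Donsker hypothesis supplies, applied to the random index $\widehat f_n$ together with the fixed limit $f$.

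First I would enlarge the index set so that both $\widehat f_n$ and $f$ live inside a single Donsker class. The singleton $\{f\}$ is trivially $\Pb$-Donsker, because $f \in L_2(\Pb)$ forces $\Gn f \indist N(0,\var_\Pb f)$ by the ordinary CLT, and a finite union of Donsker classes is Donsker; hence $\mathcal{H} := \mathcal{F} \cup \{f\}$ is $\Pb$-Donsker with $\widehat f_n, f \in \mathcal{H}$. By definition this means $\Gn \indist G$ in $\ell^\infty(\mathcal{H})$ for a tight Gaussian limit $G$, whose intrinsic semimetric is the standard-deviation semimetric $\rho(g_1,g_2)^2 = \var_\Pb(g_1 - g_2) \le \|g_1 - g_2\|^2$. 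Tightness of the limit is equivalent to total boundedness of $(\mathcal{H},\rho)$ together with asymptotic $\rho$-equicontinuity of $\Gn$: for every $\epsilon,\eta > 0$ there is $\delta > 0$ with
\[
\limsup_{n\to\infty}\, \Pb^*\Big(\sup_{g_1,g_2 \in \mathcal{H},\ \rho(g_1,g_2)<\delta} |\Gn g_1 - \Gn g_2| > \epsilon\Big) < \eta .
\]

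Next I would feed the estimator into this bound. Because $\rho \le \|\cdot\|$, the hypothesis $\|\widehat f_n - f\|^2 \inprob 0$ yields $\rho(\widehat f_n, f) \inprob 0$, so for the $\delta$ above $\Pb(\rho(\widehat f_n, f) \ge \delta) \to 0$. On the event $\{\rho(\widehat f_n, f) < \delta\}$ the quantity $|\Gn(\widehat f_n - f)| = |\Gn \widehat f_n - \Gn f|$ is dominated by the supremum in the displayed inequality, whence
\[
\limsup_{n\to\infty}\, \Pb^*\big(|\Gn(\widehat f_n - f)| > \epsilon\big) \le \limsup_{n\to\infty}\Pb(\rho(\widehat f_n,f)\ge\delta) + \eta = \eta .
\]
Letting $\eta \downarrow 0$ gives $\Gn(\widehat f_n - f) \inprob 0$, i.e.\ the desired $(\Pn-\Pb)(\widehat f_n - f) = o_\Pb(n^{-1/2})$.

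The main obstacle is the step that converts the Donsker property into the usable equicontinuity inequality: one must invoke the characterization of weak convergence in $\ell^\infty(\mathcal{H})$ (tightness of $G$ $\Leftrightarrow$ total boundedness plus asymptotic equicontinuity), and handle the fact that the supremum over $\mathcal{H}$ need not be measurable, which is why outer probability $\Pb^*$ appears throughout. A secondary point requiring care is that $\widehat f_n$ is data-dependent, so one cannot simply evaluate the weak limit at it; the uniform (over all pairs within $\rho$-distance $\delta$) nature of the equicontinuity bound is precisely what licenses plugging in the random index, and the augmentation $\mathcal{H} = \mathcal{F} \cup \{f\}$ is what lets $f$ — assumed only to lie in $L_2(\Pb)$, not necessarily in $\mathcal{F}$ — serve as a legitimate second index point.
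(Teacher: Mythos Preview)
Your argument is correct and is essentially the standard proof of this lemma via asymptotic equicontinuity of the empirical process indexed by a Donsker class. Note, however, that the paper does not supply its own proof of this statement: it is quoted verbatim as Lemma~19.24 of \citet{van2000asymptotic} and used as a black box in the proof of Theorem~\ref{theorem:dr_normal_original}. There is therefore nothing in the paper to compare your proof against; your write-up simply reconstructs the argument that van der Vaart gives, including the augmentation $\mathcal{F}\cup\{f\}$ to accommodate a limit $f$ that need only lie in $L_2(\Pb)$.
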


However, the concept of Donsker class is fairly restrictive. In fact, the intuition behind the nice behaviour exhibited by Donsker classes in empirical process is their inability to overfit, due to their constricted flexibility. For instance, commonly used estimators such as random forests do not belong to a Donsker class. In order to circumvent this problem, sample splitting proves very useful. The following lemma exhibits the power of sample splitting regardless of the flexibility of the estimators.

\begin{lemma}[{\citet[Lemma 2]{kennedy2020sharp}}]\label{lemma:split}
Let $\widehat{f}(z)$ be a function estimated from a sample $Z^N=(Z_{n+1},\ldots,Z_N)$, and let $\Pn$ denote the empirical measure over $(Z_1,\ldots,Z_n)$, which is independent of $Z^N$. Then 
$$ (\Pn-\Pb) (\widehat{f}-f) = O_\Pb\left( \frac{ \| \widehat{f}-f \| }{\sqrt{n}}  \right) . $$ 
\end{lemma}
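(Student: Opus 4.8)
The plan is to exploit the independence between the estimation sample $Z^N=(Z_{n+1},\ldots,Z_N)$ and the evaluation sample $(Z_1,\ldots,Z_n)$ by conditioning on $Z^N$. The entire difficulty in controlling an empirical-process term such as $(\Pn-\Pb)(\widehat{f}-f)$ stems from the fact that $\widehat{f}$ is random and data-dependent; once we condition on $Z^N$, the function $g:=\widehat{f}-f$ becomes a fixed (deterministic) element of $L_2(\Pb)$, and the only remaining randomness is carried by the iid draws $Z_1,\ldots,Z_n\sim\Pb$, which are independent of $Z^N$. This converts the claim into an elementary second-moment computation for an average of iid mean-zero terms, with no entropy or Donsker control required.

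First I would write $(\Pn-\Pb)(g)=\frac{1}{n}\sum_{i=1}^n\{g(Z_i)-\Pb(g)\}$ and observe that, conditionally on $Z^N$, this is an average of $n$ iid mean-zero variables. Hence $\E[(\Pn-\Pb)(g)\mid Z^N]=0$ and
$$\var\big((\Pn-\Pb)(g)\mid Z^N\big)=\frac{1}{n}\var_\Pb(g(Z))\leq\frac{1}{n}\int g(z)^2\,d\Pb(z)=\frac{\|\widehat{f}-f\|^2}{n},$$
where the conditional variance is dominated by the conditional second moment, which is exactly the squared $L_2(\Pb)$ norm appearing in the statement (and which is itself $Z^N$-measurable).

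Second, I would apply Chebyshev's inequality conditionally on $Z^N$: for any $t>0$,
$$\Pb\left(|(\Pn-\Pb)(g)|>t\,\frac{\|\widehat{f}-f\|}{\sqrt{n}}\;\Big|\;Z^N\right)\leq\frac{\var((\Pn-\Pb)(g)\mid Z^N)}{t^2\|\widehat{f}-f\|^2/n}\leq\frac{1}{t^2}.$$
Since the bound $1/t^2$ does not depend on $Z^N$, taking expectation over $Z^N$ preserves it and yields the unconditional tail bound $\Pb(|(\Pn-\Pb)(\widehat{f}-f)|>t\|\widehat{f}-f\|/\sqrt{n})\leq 1/t^2$ for every $n$. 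This is precisely the statement that the ratio $(\Pn-\Pb)(\widehat{f}-f)\big/(\|\widehat{f}-f\|/\sqrt{n})$ is bounded in probability uniformly in $n$, i.e. $(\Pn-\Pb)(\widehat{f}-f)=O_\Pb(\|\widehat{f}-f\|/\sqrt{n})$; the degenerate event $\|\widehat{f}-f\|=0$ is handled separately by noting it forces $(\Pn-\Pb)(\widehat{f}-f)=0$ almost surely.

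There is essentially no serious obstacle: the lemma is engineered so that sample splitting replaces any complexity/entropy argument with a one-line variance bound. The only point demanding care is the bookkeeping of the conditioning — ensuring $\|\widehat{f}-f\|^2$ is read as the $Z^N$-measurable quantity $\int(\widehat{f}(z)-f(z))^2\,d\Pb(z)$, and interpreting the $O_\Pb$ claim with this random normalizing sequence, so that the uniform-in-$n$ conditional bound genuinely delivers boundedness in probability of the normalized term.
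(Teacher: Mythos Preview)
Your proposal is correct and mirrors the paper's own proof essentially line by line: condition on $Z^N$ so $\widehat{f}-f$ is fixed, bound the conditional variance by $\|\widehat{f}-f\|^2/n$, and apply Chebyshev (then iterate expectations) to obtain the uniform $1/t^2$ tail bound. Your explicit handling of the degenerate case $\|\widehat{f}-f\|=0$ is a small addition the paper omits.
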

\begin{proof}
First note that, conditional on $Z^N$, the term in question has mean zero since 
\begin{align*}
     \E\Big\{ (\Pn-\Pb)(\widehat{f}-f)  \Bigm| Z^N \Big\}  &= \E\Big\{ \Pn(\widehat{f}-f)   \Bigm| Z^N \Big\}   - \Pb(\widehat{f}-f) \\&= \E(\widehat{f}-f \mid Z^N) -  \Pb(\widehat{f}-f) \\&= \Pb(\widehat{f}-f) - \Pb(\widehat{f}-f) \\&= 0.
\end{align*}
The conditional variance is
\begin{align*}
\var\Big\{ (\Pn-\Pb) (\widehat{f}-f) \Bigm| Z^N \Big\} &=  \var\Big\{ \Pn(\widehat{f}-f) \Bigm| Z^N \Big\} = \frac{1}{n} \var(\widehat{f}-f \mid  Z^N ) \leq \|\widehat{f}-f\|^2 /n .
\end{align*}
Therefore by iterated expectation and Chebyshev's inequality we have
\begin{align*}
\Pb\left\{ \frac{ | (\Pn-\Pb)(\widehat{f}-f) | }{ \| \widehat{f}-f \| / \sqrt{n} } \geq t \right\} &= \Pb\left[ \E\left\{ \frac{ | (\Pn-\Pb)(\widehat{f}-f) | }{ \| \widehat{f}-f \| / \sqrt{n} } \geq t \Bigm| Z^N \right\} \right] \leq \frac{1}{t^2} .
\end{align*}
Thus for any $\epsilon>0$ we can pick $t=1/\sqrt{\epsilon}$ so that the probability above is no more than $\epsilon$, which yields the result.
\end{proof}

These two lemmas allow for proving Theorem~\ref{theorem:dr_normal_original}. Recall that this theorem exhibits the empirical mean-like behaviour of doubly robust estimators. This will be key when proving the main result of the paper.

\begin{proof} [Proof of Theorem~\ref{theorem:dr_normal_original}]
We have that $\hat{\psi}_{\text{DR}} - \psi = Z^* + T_1 + T_2$, where
$$Z^* = (\mathbb{P}_n - \mathbb{P}) f, \quad T_1 = (\mathbb{P}_n - \mathbb{P})(\hat{f} - f), \quad T_2 = \mathbb{P}(\hat{f} - f).$$
The consistency of $f$ in $L_2$ norm, together with either the Donsker condition (Lemma~\ref{lemma:van2000donsker}) or sample splitting (Lemma~\ref{lemma:split}), ensures that $T_1 = o_\mathbb{P}(1/\sqrt{n})$. For $T_2 = \mathbb{P}(\hat{f} - f)$ and $f = f_1 - f_0$, we have
\begin{align*}
    |\mathbb{P}(\hat{f}_1 - f_1)| &= |\mathbb{P} \{ \frac{A}{\hat{\pi}}(Y - \hat{\theta}_A) + \hat{\theta}_1 - \theta_1 \}| = |\mathbb{P} \{ (\frac{\pi}{\hat{\pi}} - 1)(\theta_1 - \hat{\theta}_1) \}| \\
    &\leq \frac{1}{\epsilon} \mathbb{P} \{ |\pi - \hat{\pi}||\theta_1 - \hat{\theta}_1|\} \leq \frac{1}{\epsilon} \| \pi - \hat{\pi}\| \|\theta_1 - \hat{\theta}_1 \|.
\end{align*}
Same logic applies for $|\mathbb{P}(\hat{f}_0 - f_0) |\leq \frac{1}{1 - \epsilon} \| \pi - \hat{\pi}\| \|\theta_0 - \hat{\theta}_0 \|$. Therefore $T_2 = o_\mathbb{P}(1/\sqrt{n})$ and the result follows.  
\end{proof}

We have exhibited the proofs of Lemma~\ref{lemma:split} and Theorem~\ref{theorem:dr_normal_original} for sake of completeness. We will extend and prove the analogous results in the $\mathcal{H}$-valued setting in Subsection \ref{subsection:new_theorems}, which will be needed in demonstrating the main result of the paper. We refer the reader to \citet{kennedy2022semiparametric} for a full discussion and exhibition of the aforementioned causal inference results. Lastly, we exhibit the definition of a Glivenko-Cantelli class, which is used in Theorem \ref{theorem:dr_gaussian_kte}.

\begin{definition} [Glivenko-Cantelli] \label{definition:gc}
We say that a class of integrable real-valued functions $\mathcal{F}$ is a Glivenko-Cantelli class for $\mathbb{P}$ if 
\begin{align}\nonumber  
\sup_{f \in \mathcal{F}} \bigg| \frac{1}{n} \sum_{i = 1}^n f(X_i) - \mathbb{E}[f(X)] \bigg|
\end{align}
converges to zero in probability as $n \to \infty$.
    
\end{definition}

\subsection{Extension of the results to (infinite-dimensional) vector-valued outcomes} \label{subsection:new_theorems}

We introduce the extension of Theorem~\ref{theorem:dr_normal_original} to the $\mathcal{H}$-valued outcome setting. For such goal, we also generalize Lemma~\ref{lemma:split} to the $\mathcal{H}$-valued scenario and we comment on asymptotically equicontinuous empirical processes \cite{park2022towards}, which will substitute the Donsker class condition in the generalized version of Theorem~\ref{theorem:dr_normal_original}.

We start by presenting the generalization of Lemma~\ref{lemma:split} to the functional setting.

\begin{lemma} [Extension of Lemma~\ref{lemma:split} to $\mathcal{H}$-valued outcomes] \label{lemma:split_function_valued}
Let $\mathcal{H}$ be a Hilbert space. Let $\widehat{\omega}(z)$ be a function estimated from a sample $Z^N=(Z_{n+1},\ldots,Z_N)$ such that $\widehat{\omega}(z) \in \mathcal{H}$ for all $z$, and let $\Pn$ denote the empirical measure over $(Z_1,\ldots,Z_n)$, which is independent of $Z^N$. Then 
$$ \| (\Pn-\Pb) (\widehat{\omega}-\omega) \|_\mathcal{H} = O_\Pb\left( \frac{ \| \widehat{\omega}-\omega \| }{\sqrt{n}}  \right) . $$ 
\end{lemma}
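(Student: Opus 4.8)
The plan is to mimic the proof of Lemma~\ref{lemma:split} given just above, replacing absolute values with Hilbert-space norms and ordinary variances with the trace of the covariance operator. Write $D_n := (\Pn-\Pb)(\widehat\omega - \omega) = \frac{1}{n}\sum_{i=1}^n \bigl\{ (\widehat\omega-\omega)(Z_i) - \Pb(\widehat\omega-\omega)\bigr\}$, which is a well-defined element of $\mathcal{H}$ via Bochner integration. Conditioning on $Z^N$, the summands are i.i.d. $\mathcal{H}$-valued random elements with mean zero (since, by independence of $\Pn$ from $Z^N$, $\E\{\Pn(\widehat\omega - \omega)\mid Z^N\} = \Pb(\widehat\omega-\omega)$, exactly as in the scalar case), so $\E\{D_n \mid Z^N\} = 0$.

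The key computation is the conditional second moment of the norm. First I would record the $\mathcal{H}$-valued analogue of the ``variance of a sample mean'' identity: for i.i.d. mean-zero $\mathcal{H}$-valued $\eta_1,\dots,\eta_n$,
\begin{align*}
    \E\Bigl\| \tfrac{1}{n}\sum_{i=1}^n \eta_i \Bigr\|_\mathcal{H}^2 = \tfrac{1}{n}\,\E\|\eta_1\|_\mathcal{H}^2,
\end{align*}
which follows by expanding the square as a double sum of inner products and using $\E\langle \eta_i,\eta_j\rangle = \langle \E\eta_i, \E\eta_j\rangle = 0$ for $i\neq j$ (this is just Lemma~\ref{lemma:norm_hilbert} applied to the sample mean, or a direct orthogonality argument). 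Applying this conditionally on $Z^N$ with $\eta_i = (\widehat\omega-\omega)(Z_i) - \Pb(\widehat\omega-\omega)$ gives
\begin{align*}
    \E\bigl\{ \|D_n\|_\mathcal{H}^2 \bigm| Z^N \bigr\} = \tfrac{1}{n}\,\E\bigl\{ \|(\widehat\omega-\omega)(Z) - \Pb(\widehat\omega-\omega)\|_\mathcal{H}^2 \bigm| Z^N \bigr\} \le \tfrac{1}{n}\,\E\bigl\{\|(\widehat\omega-\omega)(Z)\|_\mathcal{H}^2 \bigm| Z^N\bigr\} = \tfrac{\|\widehat\omega-\omega\|^2}{n},
\end{align*}
where the inequality is again Lemma~\ref{lemma:norm_hilbert} (the centered second moment is no larger than the raw second moment) and the last equality is the definition of the $L_2(\Pb)$ norm $\|\cdot\|$ of an $\mathcal{H}$-valued function from Subsection~\ref{subsection:notation}.

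Finally I would apply Markov's (Chebyshev's) inequality to the nonnegative random variable $\|D_n\|_\mathcal{H}^2$: for $t>0$,
\begin{align*}
    \Pb\Bigl\{ \tfrac{\|D_n\|_\mathcal{H}}{\|\widehat\omega-\omega\|/\sqrt{n}} \ge t \Bigr\} = \Pb\Bigl\{ \|D_n\|_\mathcal{H}^2 \ge t^2\,\tfrac{\|\widehat\omega-\omega\|^2}{n} \Bigr\} \le \E\Bigl[ \tfrac{n\,\E\{\|D_n\|_\mathcal{H}^2\mid Z^N\}}{t^2\,\|\widehat\omega-\omega\|^2} \Bigr] \le \tfrac{1}{t^2},
\end{align*}
by iterated expectation, matching the scalar argument; choosing $t = 1/\sqrt{\epsilon}$ makes this at most $\epsilon$, which is exactly the statement $\|D_n\|_\mathcal{H} = O_\Pb(\|\widehat\omega-\omega\|/\sqrt n)$. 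I do not anticipate a genuine obstacle here — the content is entirely routine once the Bochner-integral framework is in place — but the one point requiring a little care is the measurability/integrability bookkeeping: ensuring $\widehat\omega-\omega$ is strongly measurable and square Bochner-integrable so that $\Pb(\widehat\omega-\omega)$, the conditional covariance operator, and the identity $\E\|\cdot\|^2 = \tfrac1n\E\|\cdot\|^2$ all make sense; separability of $\mathcal{H}$ (available from the separable-RKHS hypothesis carried through the paper) is what licenses this. I would state those as standing assumptions and cite Lemma~\ref{lemma:norm_hilbert} for the two moment identities rather than re-deriving them.
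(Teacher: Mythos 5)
Your proposal is correct and follows essentially the same route as the paper's own proof: conditional mean zero given $Z^N$, the orthogonality/double-sum expansion of $\E\{\|(\Pn-\Pb)(\widehat\omega-\omega)\|_\mathcal{H}^2\mid Z^N\}$ reduced via Lemma~\ref{lemma:norm_hilbert} to $\|\widehat\omega-\omega\|^2/n$, and then conditional Markov plus iterated expectation with $t=1/\sqrt{\epsilon}$. The only difference is presentational — you package the cross-term cancellation as a stated sample-mean identity rather than writing out the double sum — and your remark about strong measurability and separability matches the standing assumptions the paper relies on.
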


\begin{proof}
First note that, conditional on $Z^N$, the term in question has mean zero since
\begin{align*}
    \E\Big\{ (\Pn-\Pb) (\widehat{\omega}-\omega) \Bigm| Z^N \Big\}  &= \E\Big\{ \Pn (\widehat{\omega}-\omega) \Bigm| Z^N \Big\} - \Pb (\widehat{\omega}-\omega) \\ 
    &= \E(\widehat{\omega}-\omega \mid Z^N) - \Pb(\widehat{\omega}-\omega) \\
    &= \Pb(\widehat{\omega}-\omega) - \Pb(\widehat{\omega}-\omega) \\
    &= 0.
\end{align*}

The conditional expectation of the squared norm is
\begin{align*}
\E\Big\{ \| (\Pn-\Pb)(\widehat{\omega}-\omega) \|_\mathcal{H}^2 \Bigm| Z^N \Big\}
&= \E\Big\{ \langle (\Pn-\Pb)(\widehat{\omega}-\omega), (\Pn-\Pb)(\widehat{\omega}-\omega) \rangle_\mathcal{H} \Bigm| Z^N \Big\} \\
&= \E\Big\{ \langle \frac{1}{n} \sum_{i=1}^n \left[(\widehat{\omega}-\omega)(Z_i) - \Pb(\widehat{\omega}-\omega) \right], \frac{1}{n} \sum_{j=1}^n \left[(\widehat{\omega}-\omega)(Z_j) - \Pb(\widehat{\omega}-\omega) \right] \rangle_\mathcal{H} \Bigm| Z^N \Big\} \\
&=  \frac{1}{n^2} \sum_{i,j=1}^n \E\Big\{ \langle (\widehat{\omega}-\omega)(Z_i) - \Pb(\widehat{\omega}-\omega), (\widehat{\omega}-\omega)(Z_j) - \Pb(\widehat{\omega}-\omega) \rangle_\mathcal{H} \Bigm| Z^N \Big\} \\
&\stackrel{(i)}{=}  \frac{1}{n^2} \sum_{i=1}^n \E\Big\{ \langle (\widehat{\omega}-\omega)(Z_i) - \Pb(\widehat{\omega}-\omega), (\widehat{\omega}-\omega)(Z_i) - \Pb(\widehat{\omega}-\omega) \rangle_\mathcal{H} \Bigm| Z^N \Big\} \\
&= \frac{1}{n^2} \sum_{i=1}^n \E\Big\{ \| (\widehat{\omega}-\omega)(Z_i) - \Pb(\widehat{\omega}-\omega)\|_\mathcal{H}^2 \Bigm| Z^N \Big\} \\
&= \frac{1}{n} \E\Big\{ \| (\widehat{\omega}-\omega)(Z_1) - \Pb(\widehat{\omega}-\omega)\|_\mathcal{H}^2 \Bigm| Z^N \Big\} \\
&\stackrel{(ii)}{=} \frac{1}{n} \left[ \E\Big\{ \| (\widehat{\omega}-\omega)(Z_1)\|_\mathcal{H}^2 \Bigm| Z^N \Big\} - \Big\{ \| \Pb(\widehat{\omega}-\omega)\|_\mathcal{H}^2 \Bigm| Z^N \Big\} \right] \\
&=  \frac{1}{n} \|\widehat{\omega}-\omega\|^2 - \frac{1}{n} \Big\{ \| \Pb(\widehat{\omega}-\omega)\|_\mathcal{H}^2 \Bigm| Z^N \Big\} \\
&\leq \frac{1}{n} \|\widehat{\omega}-\omega\|^2,
\end{align*}
where (i) is obtained given independence of $Z_i, Z_j$ when $i \neq j$ and $\E\Big\{ (\Pn-\Pb) (\widehat{\omega}-\omega) \Bigm| Z^N \Big\} = 0$, and (ii) by Lemma~\ref{lemma:norm_hilbert}. Therefore by iterated expectation and Markov's inequality we have
\begin{align*}
\Pb\left\{ \frac{ \| (\Pn-\Pb)(\widehat{\omega}-\omega) \|_\mathcal{H} }{ \| \widehat{\omega}-\omega \| / \sqrt{n} } \geq t \right\} &= \Pb\left[ \E\left\{ \frac{ \| (\Pn-\Pb)(\widehat{\omega}-\omega) \|_\mathcal{H} }{ \| \widehat{\omega}-\omega \| / \sqrt{n} } \geq t \Bigm| Z^N \right\} \right]
\\ &= \Pb\left[ \E\left\{ \frac{ \| (\Pn-\Pb)(\widehat{\omega}-\omega) \|_\mathcal{H}^2 }{ \| \widehat{\omega}-\omega \|^2 / n } \geq t^2 \Bigm| Z^N \right\} \right]
\\ &\leq \frac{1}{t^2} .
\end{align*}
Thus for any $\epsilon>0$ we can pick $t=1/\sqrt{\epsilon}$ so that the probability above is no more than $\epsilon$, which yields the result.
\end{proof}

Note that the former result implies that, if $\| \widehat{\omega}-\omega \| \stackrel{p}{\to}0$, then
$$ \| (\Pn-\Pb) (\widehat{\omega}-\omega) \|_\mathcal{H} = o_\Pb\left( \frac{1}{\sqrt{n}}  \right),$$ 
given that we conduct sample splitting. However, sample splitting might not be necessary if our estimators are not flexible enough. Splitting the data might result in a loss in power, hence we are interested in stating another sufficient condition that will lead the \textit{empirical term} $(\Pn-\Pb) (\widehat{\omega}-\omega)$ to be asymptotically negligible without conducting sample splitting. A direct extension of the Donsker class condition stated in Theorem~\ref{theorem:dr_normal_original} to $\mathcal{H}$-valued outcomes is not possible, given that bounding entropies usually make explicit use of the fact that $\mathbb{R}$ is totally ordered \citep{park2022towards}. However, \citet{park2022towards} defines asymptotic equicontinuity, which is precisely the condition which we will require in the extension of Theorem~\ref{theorem:dr_normal_original} to the functional setting.

\begin{definition}[Asymptotic equicontinuity] \label{definition:asymptotic_equicontinuity}
We say that the empirical process $\left\{ \nu_n(\omega) = \sqrt{n}(\Pn - \Pb) \omega: \omega \in \mathcal{G} \right\}$ with values in $\mathcal{H}$ and indexed by $\mathcal{G}$ is \textit{asymptotic equicontinuous} at $\omega_0 \in \mathcal{G}$ if, for every sequence $\{ \hat{\omega}_n \} \subset \mathcal{G}$ with $\| \hat{\omega}_n - \omega_0\| \stackrel{p}{\to} 0$, we have
\begin{align} \label{eq:asymptotic_continuity}
    \| \nu_n(\hat{\omega}_n) - \nu_n(\omega_0)\|_\mathcal{H} \stackrel{p}{\to} 0.
\end{align}
\end{definition}  

Note that \eqref{eq:asymptotic_continuity} is equivalent to $(\Pn-\Pb) (\widehat{\omega}-\omega) = o_\Pb\left( \frac{1}{\sqrt{n}}  \right)$. \citet{park2022towards} gives sufficient conditions for asymptotic equicontinuity to hold, as well as stating examples of classes that attain asymptotic equicontinuity. We are now ready to present the extension of Theorem~\ref{theorem:dr_normal_original} to the $\mathcal{H}$-valued setting.

\begin{theorem} [Extension of Theorem~\ref{theorem:dr_normal_original} to $\mathcal{H}$-valued outcomes] \label{theorem:dr_normal} 
Let $\phi(z) = \{ \frac{a}{\pi(x)} - \frac{1 - a}{1 - \pi(x)} \} \{y - \beta_a(x) \} + \beta_1(x) - \beta_0(x)$, where $y$ and $\beta$ are $\mathcal{H}$-valued, so that $\Psi = \mathbb{E}\{ \phi(Z) \}$ is the average treatment effect. Suppose that
\begin{itemize}
    \item $\hat{\phi}$ is constructed from an independent sample or the respective empirical process is asymptotically equicontinuous at $\phi$. 
    \item  $||\hat{\phi} - \phi|| = o_{\mathbb{P}}(1)$.
\end{itemize}
Also assume that $\mathbb{P}(\hat{\pi} \in [ \epsilon, 1 - \epsilon ]) = 1$. Then if $|| \hat{\pi} - \pi || \sum_a ||\hat{\beta}_a - \beta_a|| = o_\mathbb{P}(\frac{1}{\sqrt{n}})$ it follows that
$$\hat{\Psi}_{\text{DR}} - \Psi = (\mathbb{P}_n - \mathbb{P}) \phi(Z) + \xi_n,$$
where $\| \xi_n \|_\mathcal{H} = o_\mathbb{P}(\frac{1}{\sqrt{n}})$. Hence, it is root-n consistent.
\end{theorem}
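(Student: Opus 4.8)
The plan is to mirror the proof of Theorem~\ref{theorem:dr_normal_original}, replacing absolute values by $\mathcal{H}$-norms and scalar expectations by Bochner integrals throughout. First I would write the estimation error as
\begin{align*}
\hat{\Psi}_{\text{DR}} - \Psi = \Pn \hat{\phi} - \Pb \phi = (\Pn - \Pb)\phi + \underbrace{(\Pn - \Pb)(\hat{\phi} - \phi)}_{=:\,T_1} + \underbrace{\Pb(\hat{\phi} - \phi)}_{=:\,T_2},
\end{align*}
so that $\xi_n = T_1 + T_2$, and it suffices to prove $\| T_1 \|_\mathcal{H} = o_\Pb(1/\sqrt{n})$ and $\| T_2 \|_\mathcal{H} = o_\Pb(1/\sqrt{n})$. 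The displayed identity then follows immediately, and root-$n$ consistency follows because $(\Pn - \Pb)\phi = O_\Pb(1/\sqrt{n})$ by the Hilbert-space CLT (Theorem~\ref{theorem:clt_hilbert}), using $\E \| \phi \|_\mathcal{H}^2 < \infty$.

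For $T_1$ I would invoke whichever of the two hypotheses is in force. If $\hat{\phi}$ is built from an independent sample, Lemma~\ref{lemma:split_function_valued} with $\widehat{\omega} = \hat{\phi}$, $\omega = \phi$ gives $\| T_1 \|_\mathcal{H} = O_\Pb(\| \hat{\phi} - \phi \| / \sqrt{n})$, which is $o_\Pb(1/\sqrt{n})$ since $\| \hat{\phi} - \phi \| = o_\Pb(1)$. If instead the empirical process of $\hat{\phi}$ is asymptotically equicontinuous at $\phi$ (Definition~\ref{definition:asymptotic_equicontinuity}), then $\| \hat{\phi} - \phi \| \inprob 0$ forces $\| \nu_n(\hat{\phi}) - \nu_n(\phi) \|_\mathcal{H} = \sqrt{n}\,\| T_1 \|_\mathcal{H} \inprob 0$, giving the same conclusion.

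The hard part will be $T_2$, the second-order bias term, where one must show that the doubly-robust product structure survives in the $\mathcal{H}$-valued setting. I would split $\phi = \phi_1 - \phi_0$ with $\phi_1(z) = \frac{a}{\pi(x)}\{y - \beta_1(x)\} + \beta_1(x)$ and $\phi_0(z) = \frac{1-a}{1-\pi(x)}\{y - \beta_0(x)\} + \beta_0(x)$, and similarly $\hat{\phi}_1, \hat{\phi}_0$. Conditioning on $X$ and using the (now Bochner-valued) identities $\E[A \mid X] = \pi(X)$ and $\E[k(\cdot, Y)\, A \mid X] = \pi(X)\beta_1(X)$, a short computation gives $\Pb(\hat{\phi}_1 - \phi_1) = \E\big[(1 - \pi/\hat{\pi})(\hat{\beta}_1 - \beta_1)\big]$, and analogously $\Pb(\hat{\phi}_0 - \phi_0) = \E\big[\{(\hat{\pi} - \pi)/(1 - \hat{\pi})\}(\hat{\beta}_0 - \beta_0)\big]$. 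Applying the triangle inequality for Bochner integrals ($\| \E[V] \|_\mathcal{H} \le \E\| V \|_\mathcal{H}$), the two-sided clipping $\hat{\pi} \in [\epsilon, 1-\epsilon]$, and Cauchy--Schwarz in $L_2(\Pb)$ on the scalar factors $|\hat{\pi} - \pi|$ and $\| \hat{\beta}_a - \beta_a \|_\mathcal{H}$, I obtain $\| \Pb(\hat{\phi}_a - \phi_a) \|_\mathcal{H} \le \epsilon^{-1} \| \hat{\pi} - \pi \|\, \| \hat{\beta}_a - \beta_a \|$ for $a \in \{0,1\}$, hence $\| T_2 \|_\mathcal{H} \le \epsilon^{-1} \| \hat{\pi} - \pi \| \sum_a \| \hat{\beta}_a - \beta_a \| = o_\Pb(1/\sqrt{n})$ by the rate assumption. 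Combining, $\| \xi_n \|_\mathcal{H} \le \| T_1 \|_\mathcal{H} + \| T_2 \|_\mathcal{H} = o_\Pb(1/\sqrt{n})$. The only genuinely new bookkeeping relative to the scalar argument is (i) justifying that $\mathcal{H}$-valued expectations commute with conditioning and satisfy the triangle inequality, and (ii) checking that the cross term in $T_2$ really collapses to the product $\| \hat{\pi} - \pi \| \cdot \| \hat{\beta}_a - \beta_a \|$ rather than leaving $\mathcal{H}$-norms tangled inside a single expectation — both handled by Jensen/triangle for Bochner integrals followed by scalar Cauchy--Schwarz.
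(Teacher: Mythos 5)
Your proposal is correct and follows essentially the same route as the paper's proof: the identical decomposition into $(\Pn-\Pb)\phi + T_1 + T_2$, with $T_1$ handled by Lemma~\ref{lemma:split_function_valued} or asymptotic equicontinuity and $T_2$ collapsed to the doubly-robust product bound $\epsilon^{-1}\|\hat{\pi}-\pi\|\,\|\hat{\beta}_a-\beta_a\|$ via the Bochner triangle inequality and Cauchy--Schwarz. No gaps; your extra remarks on the tower property and root-$n$ consistency via the Hilbert-space CLT are consistent with what the paper does implicitly.
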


\begin{proof}
We have that $\hat{\Psi}_{\text{DR}} - \Psi = Z^* + T_1 + T_2$, where
$$Z^* = (\mathbb{P}_n - \mathbb{P}) \phi, \quad T_1 = (\mathbb{P}_n - \mathbb{P})(\hat{\phi} - \phi), \quad T_2 = \mathbb{P}(\hat{\phi} - \phi).$$
The consistency of $\phi$ in $L_2$ norm, together with either asymptotic equicontinuity (Definition~\ref{definition:asymptotic_equicontinuity}) or sample splitting (Lemma~\ref{lemma:split_function_valued}), ensures that $\| T_1 \|_\mathcal{H} = o_\mathbb{P}(1/\sqrt{n})$. For $T_2 = \mathbb{P}(\hat{\phi} - \phi)$ and $\phi = \phi_1 - \phi_0$, we have
\begin{align*}
    \| \mathbb{P}(\hat{\phi}_1 - \phi_1) \|_\mathcal{H} &= \| \mathbb{P} \{ \frac{A}{\hat{\pi}}(Y - \hat{\beta}_A) + \hat{\beta}_1 - \beta_1 \} \|_\mathcal{H} \leq \mathbb{P} \{ \| (\frac{\pi}{\hat{\pi}} - 1)(\beta_1 - \hat{\beta}_1) \|_\mathcal{H} \} \\
    &\leq \frac{1}{\epsilon} \mathbb{P} \{ |\pi - \hat{\pi}| \|\beta_1 - \hat{\beta}_1\|_\mathcal{H}\} \leq \frac{1}{\epsilon} \| \pi - \hat{\pi}\| \|\beta_1 - \hat{\beta}_1 \|.
\end{align*}
Same logic applies for $\mathbb{P}(\hat{\phi}_0 - \phi_0) \leq \frac{1}{1 - \epsilon} \| \pi - \hat{\pi}\| \|\beta_0 - \hat{\beta}_0 \|$. Therefore $\| T_2 \|_\mathcal{H} = o_\mathbb{P}(1/\sqrt{n})$ and the result follows.  
\end{proof}

\subsection{Proof of Theorem~\ref{theorem:dr_gaussian_kte}} \label{subsection:main_result}

Assume that we have access to a sample $(X_i, A_i, Y_i)_{i = 1}^{2n} \sim (X, A, Y)$. We denote $\mathcal{D}_1 = (X_i, A_i, Y_i)_{i = 1}^{n}$, $\mathcal{D}_2 = (X_j, A_j, Y_j)_{j = n+1}^{2n}$. Further,
\begin{align*}
    f_h(z) = \frac{1}{n} \sum_{j = n+1}^{2n} \langle \phi(z), \phi(Z_j) \rangle, \quad T_h = \frac{\sqrt{n}\bar{f}_h}{S_h},
\end{align*}
where $\bar{f}_h$ and $S_h^2$ are the empirical mean and variance respectively:
\begin{align*}
    \bar{f}_h = \frac{1}{n} \sum_{i = 1}^{n} f_h(Z_i), \quad S_h^2 = \frac{1}{n} \sum_{i = 1}^{n} (f_h(Z_i) - \bar{f}_h)^2.
\end{align*}
In particular, note the indices 1 to $n$ in the definition of $\bar f_h$, and the indices $n+1$ to $2n$ when defining $f_h$.
Similarly, we denote the analogous version for estimated embeddings $\hat\phi$ as follows:
\begin{align*}
    f_h^\dag(z) = \frac{1}{n} \sum_{j = n + 1}^{2n} \langle \hat{\phi}^{(1)}(z), \hat{\phi}^{(2)}(Z_j) \rangle, \quad T_h^\dag = \frac{\sqrt{n}\bar{f}^\dag_h}{S^\dag_h}, \quad
    \bar{f}_h^\dag = \frac{1}{n} \sum_{i = 1}^{n} f_h^\dag(Z_i), \quad {S_h^\dag}^2 = \frac{1}{n} \sum_{i = 1}^{n} (f_h^\dag(Z_i) - \bar{f}^\dag_h)^2.
\end{align*}
For of ease of notation, we also define
\begin{align*}
    &\tau_1 = \frac{1}{\sqrt{n}} \sum_{i = 1}^{n} \phi(Z_i), \quad \tau_2 = \frac{1}{\sqrt{n}} \sum_{j = n+1}^{2n} \phi(Z_j), \quad  \hat{\tau}_1 = \frac{1}{\sqrt{n}} \sum_{i = 1}^{n} \hat{\phi}^{(1)}(Z_i), \quad \hat{\tau}_2 = \frac{1}{\sqrt{n}} \sum_{j = n+1}^{2n} \hat{\phi}^{(2)}(Z_j).
\end{align*}

Lastly, note that the following equalities hold by definition:
\begin{align*}
    n \bar{f}_h^{} &= n\langle \frac{1}{n} \sum_{i = 1}^{n} \phi(Z_i), \frac{1}{n} \sum_{j = n+1}^{2n} \phi(Z_j) \rangle = \langle \frac{1}{\sqrt{n}} \sum_{i = 1}^{n} \phi(Z_i), \frac{1}{\sqrt{n}} \sum_{j = n+1}^{2n} \phi(Z_j) \rangle =   \langle {\tau}_1, {\tau}_2 \rangle, \\
    n \bar{f}_h^{\dag} &= n\langle \frac{1}{n} \sum_{i = 1}^{n} \hat \phi^{(1)}(Z_i), \frac{1}{n} \sum_{j = n+1}^{2n} \hat\phi^{(2)}(Z_j) \rangle = \langle \frac{1}{\sqrt{n}} \sum_{i = 1}^{n} \hat\phi^{(1)}(Z_i), \frac{1}{\sqrt{n}} \sum_{j = n+1}^{2n} \hat\phi^{(2)}(Z_j) \rangle =\langle \hat{\tau}_1, \hat{\tau}_2 \rangle, \\
    \quad (\sqrt{n}S_h^{})^2 &= n \left\{ \left(\frac{1}{n} \sum_{i=1}^n \langle {\phi}(Z_i), \frac{1}{n} \sum_{j = n+1}^{2n} \phi(Z_j) \rangle^2\right) - (\bar{f}_h^{})^2 \right\} = \frac{1}{n} \sum_{i=1}^n \langle \phi(Z_i), \tau_2 \rangle^2 - (\sqrt{n} \bar{f}_h^{})^2, \\
    (\sqrt{n}S_h^{\dag})^2 &= n \left\{ \left(\frac{1}{n} \sum_{i=1}^n \langle \hat{\phi}^{(1)}(Z_i), \frac{1}{n} \sum_{j = n+1}^{2n} \hat\phi^{(2)}(Z_j) \rangle^2\right) - (\bar{f}_h^{\dag})^2 \right\} = \frac{1}{n} \sum_{i=1}^n \langle \hat{\phi}^{(1)}(Z_i), \hat{\tau}_2 \rangle^2 - (\sqrt{n}\bar{f}_h^{\dag})^2. \\
\end{align*}

\begin{proof} [Proof of Theorem~\ref{theorem:dr_gaussian_kte}]
We will prove the theorem in four steps: 
\begin{enumerate}
    \item We first prove that $ n \bar{f}_h^{\dag} = n \bar{f}_h + o_\mathbb{P}(1)$.
    \item We then show that $n{S_h^{\dag}}^2 = nS_h^2+o_\mathbb{P}(1).$
    \item Further, we prove that $\frac{1}{\E[n\{f_h^{}(Z)\}^2 | \mathcal{D}_2]} = O_\mathbb{P}(1)$.
    \item We conclude that $T_h^\dag \stackrel{d}{\to} N(0, 1).$
\end{enumerate}
It now remains to prove each of the four steps.

\paragraph{Details of step 1.}
Theorem~\ref{theorem:dr_normal} may be applied for both $\hat \phi^{(0)}, \hat \phi^{(1)}$ because its conditions are fulfilled. We obtain that 
\begin{align*} 
    \hat{\tau}_1 = \tau_1 + \sqrt{n}\xi_n^{(1)}, \quad \hat{\tau}_2= \tau_2 + \sqrt{n}\xi^{(2)}_n,
\end{align*}
where $\| \sqrt{n}\xi_n^{(1)} \|_\mathcal{H}, \| \sqrt{n}\xi_n^{(2)} \|_\mathcal{H} = o_\mathbb{P}(1)$. Rewriting the expression, we note that
$$\hat{\tau}_1 = \tau_1 + \chi_n^{(1)}, \quad \hat{\tau}_2= \tau_2 + \chi^{(2)}_n,$$
where $\| \chi^{(1)}_n\|_\mathcal{H}, \| \chi^{(2)}_n\|_\mathcal{H} = o_\mathbb{P}(1)$. Based on Theorem~\ref{theorem:clt_hilbert}, 
\begin{align*}
  \tau_1 \stackrel{d}{\to} N(0, C), \quad \tau_2 \stackrel{d}{\to} N(0, C),
\end{align*}
where $C = \E[\phi(Z) \otimes \phi(Z)]$, which implies $\max( \| \tau_1\|_\mathcal{H}^2, \| \tau_2\|_\mathcal{H}^2) = O_\mathbb{P}(1)$. 
Consequently,
\begin{align*}
 n \bar{f}_h^{\dag} &= \langle \hat{\tau}_1, \hat{\tau}_2 \rangle \\
 &= \langle \tau_1 + \chi_n^{(1)}, \tau_2 + \chi_n^{(2)} \rangle \\
     &= \langle \tau_1, \tau_2 \rangle + \langle \tau_1,  \chi_n^{(2)} \rangle + \langle \chi_n^{(1)}, \tau_2 \rangle + \langle \chi_n^{(1)}, \chi_n^{(2)} \rangle \\
     &= n \bar{f}_h + \langle \tau_1,  \chi_n^{(2)} \rangle + \langle \chi_n^{(1)}, \tau_2 \rangle + \langle \chi_n^{(1)}, \chi_n^{(2)} \rangle.
\end{align*}
Given that $|\langle \tau_1,  \chi_n^{(2)} \rangle + \langle \chi_n^{(1)}, \tau_2 \rangle + \langle \chi_n^{(1)}, \chi_n^{(2)} \rangle|$ is upper bounded by $\|\tau_1\|_\mathcal{H}\| \chi_n^{(2)} \|_\mathcal{H} + \|\chi_n^{(1)} \|_\mathcal{H} \|\tau_2 \|_\mathcal{H} + \| \chi_n^{(1)}\|_\mathcal{H} \| \chi_n^{(2)} \|_\mathcal{H}$ (by the triangle inequality and Cauchy-Schwarz inequality), which is $O_\mathbb{P}(1) o_\mathbb{P}(1) + o_\mathbb{P}(1) O_\mathbb{P}(1) + o_\mathbb{P}(1) o_\mathbb{P}(1) = o_\mathbb{P}(1)$, we deduce that
\begin{align} \label{align:bar_f_small_oh}
     n \bar{f}_h^{\dag} = n \bar{f}_h + o_\mathbb{P}(1).
\end{align}

\paragraph{Details of step 2.}

We claim that
\begin{align} \label{align:second_moment_o1}
     | \frac{1}{n} \sum_{i=1}^n \langle \hat{\phi}^{(1)}(Z_i), \hat{\tau}_2 \rangle^2 - \frac{1}{n} \sum_{i=1}^n \langle \phi(Z_i), \tau_2 \rangle^2  | = o_\mathbb{P}(1),
\end{align}
as well as
\begin{align} \label{align:fhdag2_fh2_o1}
    (\sqrt{n} \bar{f}_h^{\dag})^2 = (\sqrt{n} \bar{f}_h)^2 + o_\mathbb{P}(1).
\end{align}
If both \eqref{align:second_moment_o1} and \eqref{align:fhdag2_fh2_o1} hold, we obtain
\begin{align*}
    (\sqrt{n}S_h^{\dag})^2 - (\sqrt{n}S_h^{})^2 &= \left( \frac{1}{n} \sum_{i=1}^n \langle \hat{\phi}^{(1)}(Z_i), \hat{\tau}_2 \rangle^2 - (\sqrt{n}\bar{f}_h^{\dag})^2\right) - \left(\frac{1}{n} \sum_{i=1}^n \langle \phi(Z_i), \tau_2 \rangle^2 - (\sqrt{n} \bar{f}_h^{})^2\right) \\
    &= \left( \frac{1}{n} \sum_{i=1}^n \langle \hat{\phi}^{(1)}(Z_i), \hat{\tau}_2 \rangle^2 - \frac{1}{n} \sum_{i=1}^n \langle \phi(Z_i), \tau_2 \rangle^2 \right) - \left((\sqrt{n} \bar{f}_h^{\dag})^2 - (\sqrt{n} \bar{f}_h^{})^2\right) \\
    &=  o_\mathbb{P}(1) +  o_\mathbb{P}(1) \\
    &=  o_\mathbb{P}(1),
\end{align*}
hence
\begin{align} \label{align:shdag_sh_o1}
     n{S_h^{\dag}}^2 = nS_h^2+o_\mathbb{P}(1),
\end{align}
thus concluding step 2.

In order to prove \eqref{align:second_moment_o1}, we denote $\epsilon_i = \hat{\phi}^{(1)}(Z_i)-\phi(Z_i)$ and consider 
\begin{align*}
    | \frac{1}{n} \sum_{i=1}^n \langle \hat{\phi}^{(1)}(Z_i), \tau_2 \rangle^2 - \frac{1}{n} \sum_{i=1}^n \langle \phi(Z_i), \tau_2 \rangle^2 |  &= | \frac{1}{n} \sum_{i=1}^n \langle \phi(Z_i) + \epsilon_i, \tau_2 \rangle^2 - \frac{1}{n} \sum_{i=1}^n \langle \phi(Z_i), \tau_2 \rangle^2 | \\
    &= | \frac{1}{n} \sum_{i=1}^n \langle \epsilon_i, \tau_2 \rangle^2 + \frac{2}{n} \sum_{i=1}^n \langle \phi(Z_i), \tau_2 \rangle \langle \epsilon_i, \tau_2 \rangle | \\
    &\stackrel{(i)}{\leq} | \frac{1}{n} \sum_{i=1}^n \langle \epsilon_i, \tau_2 \rangle^2 | + | \frac{2}{n} \sum_{i=1}^n \langle \phi(Z_i), \tau_2 \rangle \langle \epsilon_i, \tau_2 \rangle | \\
    &\stackrel{(ii)}{\leq} \frac{1}{n} \sum_{i=1}^n \langle \epsilon_i, \tau_2 \rangle^2 +2 \left(\frac{1}{n} \sum_{i=1}^n \langle \phi(Z_i), \tau_2 \rangle^2 \right)^{\frac{1}{2}} \left( \frac{1}{n} \sum_{i=1}^n \langle \epsilon_i, \tau_2 \rangle^2 \right)^{\frac{1}{2}} \\
    &\stackrel{(iii)}{\leq} \frac{1}{n} \sum_{i=1}^n \langle \epsilon_i, \tau_2 \rangle^2 + 2\left(\frac{1}{n} \sum_{i=1}^n  \| \phi(Z_i)\|_\mathcal{H}^2 \| \tau_2 \|_\mathcal{H}^2 \right)^{\frac{1}{2}} \left( \frac{1}{n} \sum_{i=1}^n \langle \epsilon_i, \tau_2 \rangle^2 \right)^{\frac{1}{2}} \\
    &\stackrel{(iv)}{\leq} \frac{1}{n} \sum_{i=1}^n \|\epsilon_i\|_\mathcal{H}^2 \| \tau_2 \|_\mathcal{H}^2 +\\
    &\quad + 2\left(\frac{1}{n} \sum_{i=1}^n  \| \phi(Z_i)\|_\mathcal{H}^2 \| \tau_2 \|_\mathcal{H}^2 \right)^{\frac{1}{2}} \left( \frac{1}{n} \sum_{i=1}^n \|\epsilon_i\|_\mathcal{H}^2 \| \tau_2 \|_\mathcal{H}^2 \right)^{\frac{1}{2}},
\end{align*}
where (i) is obtained by the triangle inequality, and (ii), (iii), (iv) are obtained by Cauchy-Schwarz inequality.
We recall that $\|\tau_2\|_\mathcal{H}^2 = O_\mathbb{P}(1)$. Moreover, $\E \| \phi(Z)\|_\mathcal{H}^2 \leq \E \| \phi(Z)\|_\mathcal{H}^4 < \infty$ implies $\frac{1}{n} \sum_{i=1}^n \| \phi(Z_i)\|_\mathcal{H}^2 \stackrel{p}{\to} \E \| \phi(Z)\|_\mathcal{H}^2$ by the law of large numbers, so $\frac{1}{n} \sum_{i=1}^n \| \phi(Z_i)\|_\mathcal{H}^2 = O_\mathbb{P}(1)$.

If $\hat{\phi}$ was constructed independently from $\mathcal{D}_1$, then
\begin{align*} 
     \Pb \left[\frac{1}{n} \sum_{i=1}^n \|\epsilon_i\|_\mathcal{H}^2 \right] = \Pb \left[\|\epsilon_1\|_\mathcal{H}^2 \right] =  \Pb \left[\| \hat{\phi}^{(1)} - \phi\|_\mathcal{H}^2 \right] = \| \hat{\phi}^{(1)} - \phi \|^2 = o_\mathbb{P}(1),
\end{align*}
so by Markov's inequality
\begin{align*} 
    \frac{1}{n} \sum_{i=1}^n \|\epsilon_i\|_\mathcal{H}^2  = o_\mathbb{P}(1).
\end{align*}

If $\| \hat{\phi}^{(1)}\|_\mathcal{H}^2$ belongs to a Glivenko-Cantelli class, then $\| \hat{\phi}^{(1)} - \phi \|_\mathcal{H}^2$ belongs to a Glivenko-Cantelli class. From this, we conclude that $(\Pb-\Pn)  \| \widehat{\phi}^{(1)}-\phi \|_\mathcal{H}^2  = o_\mathbb{P}(1)$, hence
\begin{align*}
    \frac{1}{n} \sum_{i=1}^n \|\epsilon_i\|_\mathcal{H}^2 &= \Pn  \| \hat{\phi}^{(1)}-\phi \|_\mathcal{H}^2 \\&= (\Pb-\Pn)  \| \hat{\phi}^{(1)}-\phi \|_\mathcal{H}^2 + \Pb  \| \hat{\phi}^{(1)}-\phi \|_\mathcal{H}^2  \\&=  (\Pb-\Pn)  \| \hat{\phi}^{(1)}-\phi \|_\mathcal{H}^2 + \| \hat{\phi}^{(1)}-\phi \|^2 \\&= o_\mathbb{P}(1).
\end{align*}

In either case, 
\begin{align*}
    \bigg| \frac{1}{n} \sum_{i=1}^n \langle \hat{\phi}^{(1)}(Z_i), \tau_2 \rangle^2 - \frac{1}{n} \sum_{i=1}^n \langle \phi(Z_i), \tau_2 \rangle^2 \bigg|  &\leq \frac{1}{n} \sum_{i=1}^n \|\epsilon_i\|_\mathcal{H}^2 \| \tau_2 \|_\mathcal{H}^2 + 
    \\&\quad +2\left(\frac{1}{n} \sum_{i=1}^n  \| \phi(Z_i)\|_\mathcal{H}^2 \| \tau_2 \|_\mathcal{H}^2 \right)^{\frac{1}{2}} \left( \frac{1}{n} \sum_{i=1}^n \|\epsilon_i\|_\mathcal{H}^2 \| \tau_2 \|_\mathcal{H}^2 \right)^{\frac{1}{2}} \\
    &\leq \| \tau_2 \|_\mathcal{H}^2 \left(\frac{1}{n} \sum_{i=1}^n \|\epsilon_i\|_\mathcal{H}^2\right)+ \\
    &\quad+ 2\| \tau_2 \|_\mathcal{H}^2\left(\frac{1}{n} \sum_{i=1}^n  \| \phi(Z_i)\|_\mathcal{H}^2 \right)^{\frac{1}{2}} \left( \frac{1}{n} \sum_{i=1}^n \|\epsilon_i\|_\mathcal{H}^2  \right)^{\frac{1}{2}} \\
    &= O_\mathbb{P}(1)o_\mathbb{P}(1) + O_\mathbb{P}(1)O_\mathbb{P}(1)o_\mathbb{P}(1) \\
    &= o_\mathbb{P}(1).
\end{align*}
Further, 
\begin{align*}
    \bigg| \frac{1}{n} \sum_{i=1}^n \langle \hat{\phi}^{(1)}(Z_i), \hat{\tau}_2 \rangle^2 - \frac{1}{n} \sum_{i=1}^n \langle \hat{\phi}^{(1)}(Z_i), \tau_2 \rangle^2 \bigg| &= \bigg| \frac{1}{n} \sum_{i=1}^n \langle \hat{\phi}^{(1)}(Z_i), \tau_2 + \chi_n^{(2)} \rangle^2 - \frac{1}{n} \sum_{i=1}^n \langle \hat{\phi}^{(1)}(Z_i), \tau_2 \rangle^2 \bigg| \\
    &= \bigg|  \frac{1}{n} \sum_{i=1}^n \langle \hat{\phi}^{(1)}(Z_i), \chi_n^{(2)}\rangle^2 + \frac{2}{n} \sum_{i=1}^n \langle \hat{\phi}^{(1)}(Z_i), \tau_2 \rangle \langle \hat{\phi}^{(1)}(Z_i), \chi_n^{(2)}\rangle \bigg| \\
    &\stackrel{(i)}{\leq} \bigg|  \frac{1}{n} \sum_{i=1}^n \langle \hat{\phi}^{(1)}(Z_i), \chi_n^{(2)}\rangle^2\bigg| + \bigg|\frac{2}{n} \sum_{i=1}^n \langle \hat{\phi}^{(1)}(Z_i), \tau_2 \rangle \langle \hat{\phi}^{(1)}(Z_i), \chi_n^{(2)}\rangle \bigg| \\
    &\stackrel{(ii)}{\leq}  \frac{1}{n} \sum_{i=1}^n\langle \hat{\phi}^{(1)}(Z_i), \chi_n^{(2)}\rangle^2 +
    \\&\quad +2\left(\frac{1}{n} \sum_{i=1}^n \langle \hat{\phi}^{(1)}(Z_i), \tau_2 \rangle^2 \right)^{\frac{1}{2}} \left( \frac{1}{n} \sum_{i=1}^n \langle \hat{\phi}^{(1)}(Z_i), \chi_n^{(2)}\rangle^2 \right)^{\frac{1}{2}} \\
    &\stackrel{(iii)}{\leq}  \left(\frac{1}{n} \sum_{i=1}^n \| \hat{\phi}^{(1)}(Z_i) \|_\mathcal{H}^2\right) \left( \|\chi_n^{(2)}\|_\mathcal{H}^2 + 2 \| \tau_2 \|_\mathcal{H} \|\chi_n^{(2)}\|_\mathcal{H} \right) \\
    &= O_\mathbb{P}(1) \left\{ o_\mathbb{P}(1) + O_\mathbb{P}(1)o_\mathbb{P}(1) \right\} \\
    &= o_\mathbb{P}(1),
\end{align*}
where we obtain (i) by the triangle inequality, (ii) by Cauchy-Schwarz inequality, and (iii) by Cauchy-Schwarz inequality and grouping terms. By the triangle inequality, $ | \frac{1}{n} \sum_{i=1}^n \langle \hat{\phi}^{(1)}(Z_i), \hat{\tau}_2 \rangle^2 - \frac{1}{n} \sum_{i=1}^n \langle \phi(Z_i), \tau_2 \rangle^2  |$ is upper bounded by
\begin{align*}
         \bigg| \frac{1}{n} \sum_{i=1}^n \langle \hat{\phi}^{(1)}(Z_i), \hat{\tau}_2 \rangle^2 - \frac{1}{n} \sum_{i=1}^n \langle \hat{\phi}^{(1)}(Z_i), \tau_2 \rangle^2  \bigg|  + \bigg| \frac{1}{n} \sum_{i=1}^n \langle \hat{\phi}^{(1)}(Z_i), \tau_2 \rangle^2 - \frac{1}{n} \sum_{i=1}^n \langle \phi(Z_i), \tau_2 \rangle^2  \bigg|,
\end{align*}
which is $= o_\mathbb{P}(1)+o_\mathbb{P}(1) = o_\mathbb{P}(1)$, thus \eqref{align:second_moment_o1} holds.

In order to prove \eqref{align:fhdag2_fh2_o1}, we note that \eqref{align:bar_f_small_oh} implies $(n \bar{f}_h^{\dag})^2 = (n \bar{f}_h + o_\mathbb{P}(1))^2$. Expanding the second term we obtain 
\begin{align*}
    (n \bar{f}_h + o_\mathbb{P}(1))^2 &= (n \bar{f}_h)^2 + (o_\mathbb{P}(1))^2 + (n \bar{f}_h) o_\mathbb{P}(1) \\
    &= (n \bar{f}_h)^2 + (o_\mathbb{P}(1))^2 + O_\mathbb{P}(1) o_\mathbb{P}(1) \\
    &= (n \bar{f}_h)^2 + o_\mathbb{P}(1),
\end{align*}
given that $| n \bar{f}_h | = | \langle \tau_1, \tau_2 \rangle | \leq \| \tau_1 \|_\mathcal{H}\| \tau_2 \|_\mathcal{H} = O_\mathbb{P}(1)O_\mathbb{P}(1) = O_\mathbb{P}(1)$. Consequently, $(n \bar{f}_h^{\dag})^2 = (n \bar{f}_h)^2 + o_\mathbb{P}(1)$, which implies \eqref{align:fhdag2_fh2_o1} dividing by $n$ in both sides.

\paragraph{Details of step 3.}
We claim that
\begin{align} \label{align:one_over_second_moment_bounded}
    \frac{1}{\E[n\{f_h^{}(Z)\}^2 | \mathcal{D}_2]} = O_\mathbb{P}(1).
\end{align}

To show this, we note that
\begin{align*}
   \E[n\{f_h^{}(Z)\}^2 | \mathcal{D}_2] = \E[\langle \phi(Z), \tau_2 \rangle^2 | \mathcal{D}_2] \stackrel{(i)}{=} \langle C \tau_2, \tau_2 \rangle = \sum_{i = 1}^\infty \lambda_j \beta_j^2,
\end{align*}
where
\begin{itemize}
    \item $C = \E[\phi(Z) \otimes \phi(Z)]$, $(\lambda_j, j \geq 1)$ is the sequence of non-negative eigenvalues in decreasing order of the covariance operator $C$,
    \item $(v_j, j \geq 1)$ is the sequence of orthonormal eigenfunctions of the covariance operator $C$ (and basis of $\mathcal{H}$ as well),
    \item $\tau_2 = \sum_{i=j}^\infty \beta_j v_j,$ being $(\beta_j, j \geq 1)$ the random coefficients of $\tau_2$ with respect to the basis $(v_j, j \geq 1)$,
\end{itemize} 
and (i) is obtained from Lemma~\ref{lemma:covariance_expectations} and the fact that $\E[\phi(Z)]=0$, $\E[\| \phi(Z) \|_\mathcal{H}^2] \leq \E[\| \phi(Z) \|_\mathcal{H}^4]< \infty$.
 Based on Lemma~\ref{lemma:karhunen-loeve},
\begin{align*}
     \tau_2 \stackrel{d}{\to} \sum_{j=1}^\infty \sqrt{\lambda_j} N_j v_j,
\end{align*}
so by the continuous mapping theorem for random variables in separable Banach spaces \citep[Equation 2.11]{bosq2000linear}
\begin{align*}
    \beta_1 = \langle \tau_2 , v_1 \rangle  \stackrel{d}{\to} \langle \sum_{j=1}^\infty \sqrt{\lambda_j} N_j v_j , v_1 \rangle = \sqrt{\lambda_1} N_1.
\end{align*}
Consequently,
\begin{align} \label{align:fh2_greater_lambda1}
    \E[n\{f_h^{}(Z)\}^2 | \mathcal{D}_2] = \sum_{i = 1}^\infty \lambda_j \beta_j^2 \geq \lambda_1 \beta_1^2 \stackrel{d}{\to} \lambda_1^2 N_1^2.
\end{align}
Further, by Cauchy-Schwarz inequality,
\begin{align*} 
    0 < \mathbb{E}\left[\langle \phi(Z_1), \phi(Z_2) \rangle^2\right] \leq \mathbb{E}\left[\| \phi(Z_1) \|_\mathcal{H}^2 \| \phi(Z_2)\|_\mathcal{H}^2 \right] &= \mathbb{E}^2\left[\| \phi(Z) \|_\mathcal{H}^2 \right] \\&= \left(\E \left[ \sum_{j=1}^\infty  \langle \phi(Z), v_j \rangle^2 \right]\right)^2 \\
    &= \left( \sum_{j=1}^\infty \E \langle \phi(Z), v_j \rangle^2 \right)^2 \\
    &= \left( \sum_{j=1}^\infty \langle C v_j, v_j \rangle \right)^2 \\&= \left( \sum_{j=1}^\infty \lambda_j \right)^2.
\end{align*}
That is, the sum of non-negative eigenvalues is strictly greater than zero, so at least one of them has to be strictly positive. By the non-increasing order of the eigenvalues, this implies that the first one is strictly positive i.e., $\lambda_1 > 0$. For any $M>0$ it holds that 
\begin{align*}
    \Pb\left( \E[n\{f_h^{}(Z)\}^2 | \mathcal{D}_2] \leq \frac{1}{M} \right) = \Pb\left( \sum_{i = 1}^\infty \lambda_j \beta_j^2 \leq \frac{1}{M}  \right) \leq \Pb\left(\lambda_1 \beta_1^2 \leq \frac{1}{M}  \right).
\end{align*}
Given $\lambda_1^2 > 0$ and \eqref{align:fh2_greater_lambda1}, we also obtain
\begin{align*}
    \Pb\left(\lambda_1 \beta_1^2 \leq \frac{1}{M}  \right) \stackrel{n \to \infty}{\to} \Pb\left( N_1^2 \leq \frac{1}{M\lambda_1^2}  \right) = F\left(\frac{1}{M\lambda_1^2} \right)
\end{align*}
where $F$ is the cdf of a $\chi^2_1$ distribution. Hence for every $\epsilon > 0$, there exists $m \in \mathbb{N}$ such that 
\begin{align*}
    \bigg| \Pb\left(\lambda_1^2 \beta_1^2 \leq \frac{1}{M}  \right) - F\left(\frac{1}{M\lambda_1^2} \right) \bigg|  < \frac{\epsilon}{2}
\end{align*}
for $n \geq m$, which implies
\begin{align*}
    \Pb\left(\lambda_1^2 \beta_1^2 \leq \frac{1}{M}  \right)  < \frac{\epsilon}{2} + F\left(\frac{1}{M\lambda_1^2} \right).
\end{align*} Further, $\lim_{w \to 0} F(w) = 0$, so there exists $M^* > 0$ such that $F\left(\frac{1}{M^*\lambda_1^2} \right) < \frac{\epsilon}{2}$. Consequently, for every $\epsilon > 0$, there exist $M^* > 0$ and $m \in \mathbb{N}$ such that
\begin{align*}
    \Pb\left(\lambda_1^2 \beta_1^2 \leq \frac{1}{M^*}  \right)  < \frac{\epsilon}{2} + \frac{\epsilon}{2} = \epsilon,
\end{align*}
for $n \geq m$, so
\begin{align*}
    \Pb\left( \frac{1}{\E[n\{f_h^{}(Z)\}^2 | \mathcal{D}_2]} \geq M^* \right) = \Pb\left( \E[n\{f_h^{}(Z)\}^2 | \mathcal{D}_2] \leq \frac{1}{M} \right) \leq  \Pb\left(\lambda_1^2 \beta_1^2 \leq \frac{1}{M^*}  \right)  < \epsilon,
\end{align*}
thus concluding \eqref{align:one_over_second_moment_bounded}. 

\paragraph{Details of step 4.}

Given Conditions (i), (ii), and (iii), \citet[Proof of Theorem 4.2]{kim2020dimension} proved that 
\begin{align} \label{align:fh_f2_to_normal}
     \frac{n \bar{f}_h}{\sqrt{\E[n\{f_h^{}(Z)\}^2 | \mathcal{D}_2]}} =  \frac{\sqrt{n} \bar{f}_h}{\sqrt{\E[f_h^{2}(Z) | \mathcal{D}_2]}} \stackrel{d}{\to} N(0,1).
\end{align}
as well as
\begin{align} \label{align:sh_f2_to1}
     \frac{n(S_h)^2}{\E[n\{f_h^{}(Z)\}^2 | \mathcal{D}_2]} = \frac{S_h^2}{\E[f_h^{2}(Z) | \mathcal{D}_2]} \stackrel{p}{\to} 1.
\end{align}

Combining \eqref{align:bar_f_small_oh} and \eqref{align:one_over_second_moment_bounded}, we obtain
\begin{align*}
    \frac{n \bar{f}^{\dag}_h}{\sqrt{\E[n\{f_h^{}(Z)\}^2 | \mathcal{D}_2]}}  &=  \frac{n \bar{f}_h + o_\mathbb{P}(1)}{\sqrt{\E[n\{f_h^{}(Z)\}^2 | \mathcal{D}_2]}} 
    \\&= \frac{n \bar{f}_h }{\sqrt{\E[n\{f_h^{}(Z)\}^2 | \mathcal{D}_2]}} + o_\mathbb{P}(1) O_\mathbb{P}(1) 
    \\&= \frac{n \bar{f}_h }{\sqrt{\E[n\{f_h^{}(Z)\}^2 | \mathcal{D}_2]}} + o_\mathbb{P}(1),
\end{align*}
and based on \eqref{align:fh_f2_to_normal} and Slutsky's theorem,
\begin{align} \label{align:normal_mu1}
    \frac{n \bar{f}^{\dag}_h}{\sqrt{\E[n\{f_h^{}(Z)\}^2 | \mathcal{D}_2]}} \stackrel{d}{\to} N(0,1).
\end{align}
Combining  \eqref{align:shdag_sh_o1} and \eqref{align:one_over_second_moment_bounded}, we obtain
\begin{align*}
    \frac{n{S_h^{\dag}}^2}{\E[n\{f_h^{}(Z)\}^2 | \mathcal{D}_2]}  &=  \frac{nS_h^2 + o_\mathbb{P}(1)}{\E[n\{f_h^{}(Z)\}^2 | \mathcal{D}_2]} 
    \\&= \frac{nS_h^2}{\E[n\{f_h^{}(Z)\}^2 | \mathcal{D}_2]} + o_\mathbb{P}(1) O_\mathbb{P}(1) \\&= \frac{nS_h^2}{\E[n\{f_h^{}(Z)\}^2 | \mathcal{D}_2]} + o_\mathbb{P}(1).
\end{align*}
Thus based on \eqref{align:sh_f2_to1}, we obtain 
\begin{align*}
    \frac{n{S_h^{\dag}}^2}{\E[n\{f_h^{}(Z)\}^2 | \mathcal{D}_2]} \stackrel{p}{\to} 1,
\end{align*}
and by the continuous mapping theorem
\begin{align} \label{align:f2_shdag_to_one}
    \frac{\sqrt{\E[n\{f_h^{}(Z)\}^2 | \mathcal{D}_2]}}{\sqrt{n}{S_h^{\dag}}} \stackrel{p}{\to} 1.
\end{align}
Combining \eqref{align:normal_mu1} and \eqref{align:f2_shdag_to_one} and based on Slutsky's theorem, we obtain
\begin{align*}
    \frac{\sqrt{n}\bar{f}^{\dag}_h}{S^{\dag}_h} = \underbrace{ \frac{n \bar{f}^{\dag}_h}{\sqrt{\E[n\{f_h^{}(Z)\}^2 | \mathcal{D}_2]}}}_{\stackrel{d}{\to} N(0,1)} 
    \underbrace{ \frac{\sqrt{\E[n\{f_h^{}(Z)\}^2 | \mathcal{D}_2]}}{\sqrt{n}{S_h^{\dag}}}}_{\stackrel{p}{\to} 1} 
     \stackrel{d}{\to} N(0,1). 
\end{align*}

Hence
\begin{align*}
    T_h^\dag = \frac{\sqrt{n}\bar{f}^\dag_h}{S^\dag_h} \stackrel{d}{\to} N(0, 1).
\end{align*}

\end{proof}

\end{document}